\documentclass[subscriptcorrection,upint,varvw,barcolor=Goldenrod3,mathalfa=cal=euler,balance,hyphenate,french]{asmejour} %

\hypersetup{%
	pdfauthor={Ankit Goel},                       		   	
	pdftitle={ASME Journal Paper LaTeX Template},                  	
	pdfkeywords={ASME journal paper, LaTeX template, BibTeX style, asmejour class},
	pdfsubject = {Describes the asmejour LaTeX template},			
	pdflicenseurl={https://ctan.org/pkg/asmejour},
}

\JourName{Dynamic Systems, Measurement, and Control}

\usepackage{amsmath} 

\usepackage{amsthm}

\ifdefined\begintheorem\else
    \newtheorem{theorem}{Theorem}
\fi

\newtheorem{proposition}{Proposition}[section]
\newtheorem{definition}{Definition}[section]
\newtheorem{fact}{Fact}[section]
\newtheorem{remark}{Remark}
\newtheorem{lemma}{Lemma}[section]

\usepackage{float} 

\usepackage{tikz}
\usetikzlibrary{shapes,arrows,calc,positioning}
\tikzstyle{bigblock} = [draw, fill=blue!20, rectangle, 
    minimum height=6em, minimum width=8em]
\tikzstyle{medblock} = [draw, fill=blue!20, rectangle, 
    minimum height=4em, minimum width=4em]    
\tikzstyle{mux} = [draw, fill=black!20, rectangle, 
    minimum height=5em, minimum width=0.1em]    
\tikzstyle{smallblock} = [draw, fill=blue!20, rectangle, 
    minimum height=2em, minimum width=3em]
    
\tikzstyle{data_block} = [draw, fill=green!20, rectangle, 
    minimum height=2em, minimum width=3em]
\tikzstyle{ops_block} = [draw, fill=blue!20, rectangle, 
    minimum height=2em, minimum width=3em]    
\tikzstyle{est_block} = [draw, fill=red!20, rectangle, 
    minimum height=2em, minimum width=3em]    
    
\tikzstyle{sum} = [draw, fill=blue!20, circle, node distance=1cm,minimum height=0.5cm]
\tikzstyle{signal} = [coordinate]
\tikzstyle{pinstyle} = [pin edge={to-,thin,black}]
\tikzstyle{block} = [draw, fill=blue!20, rectangle, 
    minimum height=3em, minimum width=9em]
\tikzstyle{blockS} = [draw, fill=blue!20, rectangle, 
    minimum height=3em, minimum width=4em]    
\tikzstyle{input} = [coordinate]
\tikzstyle{output} = [coordinate]
\usetikzlibrary{matrix}

\usetikzlibrary{positioning}
\usetikzlibrary{math} 

\usetikzlibrary{patterns}
\usetikzlibrary{calc,patterns,decorations.pathmorphing,decorations.markings}
\usepackage{xcolor}
\usepackage{hyperref}
\hypersetup{
    colorlinks,
    linkcolor={blue!100!black},
    citecolor={blue!50!black},
    urlcolor={blue!80!black}
}

\newcommand{\bc}{\begin{center}}
\newcommand{\ec}{\end{center}}
\newcommand{\benum}{\begin{enumerate}}
\newcommand{\eenum}{\end{enumerate}}
\newcommand{\nn}{\nonumber}
\newcommand{\matl}{\left[ \begin{array}}
\newcommand{\matr}{\end{array} \right]}

\renewcommand{\matl}{\begin{bmatrix}}
\renewcommand{\matr}{\end{bmatrix}}

\newcommand{\matls}{\left[ \begin{smallmatrix}}
\newcommand{\matrs}{\end{smallmatrix} \right]}
\newcommand{\isdef}{\stackrel{\triangle}{=}}

\newcommand{\inv}{^{-1}}

\newcommand{\dpder}[2]{\displaystyle\frac{\partial {#1}}{\partial {#2}}}

\newcommand{\tr}{{\rm tr}\,}

\newcommand{\rmF}{{\rm F}}

\newcommand{\rmT}{{\rm T}}

\newcommand{\rma}{{\rm a}}

\newcommand{\rmc}{{\rm c}}

\newcommand{\rmr}{{\rm r}}
\newcommand{\rms}{{\rm s}}

\newcommand{\BBE}{{\mathbb E}}

\newcommand{\BBR}{{\mathbb R}}

\newcommand{\SC}{{\mathcal C}}

\newcommand{\SH}{{\mathcal H}}

\newcommand{\SM}{{\mathcal M}}

\newcommand{\SP}{{\mathcal P}}

\newcommand{\SSS}{{\mathcal S}}

\newcommand{\SV}{{\mathcal V}}

\newcommand{\SX}{{\mathcal X}}
\newcommand{\SY}{{\mathcal Y}}

\newcommand{\shiftq}{{\textbf{\textrm{q}}}}

\newcommand{\neweqline}{\ensuremath{\nn \\ &\quad }}

\usepackage{enumitem,amssymb}
\newlist{todolist}{itemize}{2}
\setlist[todolist]{label=$\square$}
\usepackage{pifont}

\usepackage{comment}

\begin{document}


\SetAuthorBlock{Parham Oveissi}{%
    Ph.D. Candidate, \\
    Department of Mechanical Engineering,\\
   University of Maryland, Baltimore County,\\
   1000 Hilltop Circle,
   Baltimore, MD, 21250  \\
   email: parhamo1@umbc.edu} 


\SetAuthorBlock{Ankit Goel\CorrespondingAuthor}{%
Assistant Professor,\\
Department of Mechanical Engineering,\\
   University of Maryland, Baltimore County,\\
   1000 Hilltop Circle,
   Baltimore, MD, 21250  \\
   email: ankgoel@umbc.edu} 

\title{Model-Free Dynamic Mode Adaptive Control for Data-Driven Control Synthesis}

\keywords{
Data-driven control,
adaptive control,
dynamic mode decomposition,
recursive least squares,
model-free control,
online system identification.}
\begin{abstract}
This paper presents a model-free, data-driven control synthesis method called dynamic mode adaptive control (DMAC) for systems whose mathematical models are unavailable or unsuitable for classical control design. 
The proposed approach combines data-driven dynamics approximation with adaptive control synthesis to enable online controller design using measured system data.
DMAC comprises two main components: a dynamics-approximation module and a controller-synthesis module. 
The dynamics approximation module estimates a local linear representation of the system dynamics directly from measurements using a matrix recursive least-squares algorithm with a forgetting factor. 
The estimated dynamics are then used to compute an online stabilizing controller with full-state feedback and integral action. 
Theoretical analysis establishes convergence properties of the recursive dynamics approximation and boundedness of the closed-loop system under the DMAC controller. 
The performance of the proposed method is demonstrated through numerical examples involving representative dynamical systems, including an unstable linear system, the Van der Pol oscillator, and the Burgers' equation. 
Sensitivity studies further demonstrate the robustness of DMAC with respect to both algorithm hyperparameters and variations in system parameters.
\end{abstract}

\date{Version \versionno, \today}

\maketitle 

\section{Introduction}
Many complex engineering systems, such as combustion processes, fluid flows, and high-dimensional structural systems, lack control-oriented mathematical models. 
Although high-fidelity computational models can accurately capture the underlying physics of these systems, they are often too large and computationally intensive for online and real-time control synthesis. 
Consequently, designing controllers that rely on explicit first-principles models can be difficult or impractical in many real-world applications. 
These challenges have motivated the development of data-driven and model-free control methods that synthesize controllers directly from measured data.
Moreover, these challenges are particularly pronounced in mechanical systems with distributed sensing, such as flexible structures and fluid--structure interaction problems, where high-dimensional measurements must be mapped to low-dimensional control objectives.


\paragraph{Literature Review}
Adaptive control methods provide a classical framework for combining online system identification with controller synthesis. 
Approaches such as Model Reference Adaptive Control (MRAC) \cite{astrom1995adaptive, annaswamy1989stable, ioannou2012robust} and self-tuning regulators \cite{aastrom1977theory, aastrom1973self} estimate model parameters online and update control laws accordingly, forming the basis of certainty-equivalent adaptive control \cite{campi1998adaptive}. 

More recently, learning-based adaptive control methods have been developed that integrate system identification with optimal control synthesis.
In particular, adaptive linear-quadratic regulator (LQR) formulations have received significant attention in both learning and control \cite{abbasi2011regret, dean2018regret, mania2019certainty, cohen2019learning}. 
These approaches provide rigorous guarantees, including regret bounds and sample-efficiency results, for linear systems with unknown dynamics.
However, these methods are largely restricted to linear-quadratic settings and emphasize performance metrics such as cumulative cost, rather than stabilization and tracking of nonlinear and high-dimensional systems. 
In addition, many of these methods rely on batch or episodic data to design controllers, resulting in policies that are computed offline rather than continuously adapted in real time.

Optimization-based model-free control methods directly update control parameters using measured data. 
Examples of such methods include Retrospective Cost Adaptive Control (RCAC) \cite{rahman2016tutorial, rahman2017retrospective}, Predictive Cost Adaptive Control (PCAC) \cite{vander2025predictive}, and reinforcement learning (RL) methods \cite{osinenko2022reinforcement, wallace2024continuous,chen2022robust}. 
RCAC admits efficient real-time implementations but can be sensitive to filter design, while PCAC requires solving constrained optimization problems online, which can limit applicability in fast systems. 
Reinforcement learning methods often require large datasets and substantial computational resources, and may lack guarantees on stability and transient performance in continuous-time settings. 

In parallel, data-driven modeling techniques have received considerable attention due to their ability to identify system dynamics directly from data. 
Dynamic Mode Decomposition (DMD) \cite{schmid2010dynamic, tu2013dynamic} provides a framework for extracting low-order linear representations of complex systems from measurement data. 
Extensions such as Dynamic Mode Decomposition with control (DMDc) \cite{proctor2016dynamic, korda2018linear} incorporate actuation into the identified models, while online variants of DMD \cite{zhang2019online} enable adaptation to time-varying dynamics. 
Although these methods are effective for system identification and reduced-order modeling, they are primarily used in an offline setting, and their integration with real-time control synthesis remains limited. 
However, a unified framework that enables online identification of state-space dynamics directly in the measurement space, preserves physical interpretability, and simultaneously performs closed-loop control synthesis remains largely unexplored.

\paragraph{Open Challenges}
Despite significant progress, several challenges remain in the development of data-driven control methods.
First, many existing approaches are limited to linear-quadratic formulations and do not directly address stabilization and tracking objectives for nonlinear, high-dimensional systems. 
Second, several methods rely on batch or episodic data and are not naturally suited for continuous online adaptation. 
Third, optimization-based model-free methods can introduce computational burdens or require careful tuning of design parameters. 
Finally, commonly used input-output identification frameworks may yield internal representations that lack physical interpretability, thereby complicating controller synthesis and limiting insight into the learned dynamics.
These limitations motivate the development of a control framework that directly integrates data-driven state-space identification with real-time control synthesis, while preserving physical interpretability and avoiding reliance on batch data or optimization-based updates.

\paragraph{Contributions}
This paper introduces a data-driven control framework, termed \textit{Dynamic Mode Adaptive Control} (DMAC), that integrates dynamic mode approximation with adaptive control synthesis to enable real-time controller design directly from measured data. 
The proposed approach constructs a local linear approximation of the system dynamics using a matrix recursive least-squares (RLS) algorithm with a forgetting factor, allowing the model to adapt to time-varying or uncertain dynamics. 
The identified dynamics are used directly for controller synthesis, resulting in a computationally efficient architecture suitable for online implementation. 
Unlike classical certainty-equivalent adaptive control, which assumes a known model structure with unknown parameters, the proposed approach constructs the system dynamics directly from data without assuming a parametric model form.


A key feature of the proposed method is its ability to operate in settings where measurements are obtained from a high-dimensional sensing space, while control objectives are defined over a lower-dimensional set of physically meaningful variables. 
Such scenarios arise in applications such as vibration suppression of flexible structures and propulsion systems with distributed sensing. 
The proposed framework preserves a direct relationship between measured variables and the system dynamics, enabling low-order representations suitable for control.
Unlike many data-driven approaches that rely on input-output models, the proposed method identifies a state-space representation directly in terms of measured system variables. 
This structure-preserving formulation retains physical interpretability and enables controller synthesis directly on the identified dynamics.

The main contributions of this paper are summarized as follows.
\begin{enumerate}
\item
A data-driven control architecture (DMAC) that integrates dynamic mode approximation with adaptive control synthesis, enabling real-time controller design directly from streaming measurement data.

\item 
A recursive identification algorithm that estimates a low-order state-space model directly in the measurement space using a matrix-valued RLS formulation, avoiding vectorization, lifting, and input-output model reconstruction.

\item 
A theoretical analysis establishing convergence properties of the identification algorithm and uniform ultimate boundedness of the closed-loop system under the proposed controller.

\item 
Simulation studies on representative dynamical systems, including linear, nonlinear, and distributed-parameter examples, demonstrating stabilization, regulation, and tracking using measurement data alone.
\end{enumerate}

The remainder of the paper is organized as follows. 
Section \ref{sec:DMAC} presents the DMAC algorithm, including the dynamic mode approximation and control synthesis procedure. 
Section \ref{sec:stability} provides the convergence and closed-loop stability analysis. 
Section \ref{sec:exmp} presents numerical examples illustrating the performance of the proposed method. 
Section \ref{sec:conclusions} concludes the paper and outlines directions for future research.
Supporting derivations and technical results are provided in the appendices, including the recursive least-squares formulation underlying the dynamic mode approximation, the integral control design used in the DMAC architecture, and detailed proofs of the theoretical results.

\section{Dynamic Mode Adaptive Control}
\label{sec:DMAC}
This section presents the dynamic mode adaptive control (DMAC) algorithm.

\subsection{Problem Formulation}

Consider a dynamic system $\SM$ whose input is $u(t) \in \BBR^{l_u}$ and the output is $y(t) \in \BBR^{l_y},$ as shown in Figure \ref{fig:DMAC_architecture}.
Let $x(t) \in \BBR^{l_x}$ denote the internal state of $\SM$.
The state $x(t)$ is not assumed to be fully measurable.
Instead, a measured portion of the state is available and denoted by $\xi(t) \in \BBR^{l_\xi}$.

\begin{figure}[h]
    \centering
    \resizebox{1\columnwidth}{!}
    {
    \begin{tikzpicture}[auto, node distance=2cm,>=latex',text centered, line width = 1.5]

        \draw[draw=black, fill=yellow!10] (-2,-2.25)              
             rectangle ++(3.75,3.5) node [xshift=-5.50em, yshift=-1em] {\textbf{DMAC}} ;

        \draw[draw=black, fill=green!10] (2.5,-1.25)              
             rectangle ++(4.75,2.5) node [xshift=-7.em, yshift=-7em] {\textbf{Sampled-data System}} ;
             
        \node [smallblock, blue, fill = blue!20, minimum width=6em, minimum height=3em] (Controller) {Control};

        \node [smallblock, right = 5 em of Controller] (zoh) {ZOH};

        \node [smallblock, red, fill=red!20,right = 2 em of zoh, minimum height=3em] (Plant) {$\SM$};
        
        \node[circle,draw=black, fill=white, inner sep=0pt,minimum size=3pt] (rc11) at ([xshift=5em,yshift=1em]Plant) {};
        \node[circle,draw=black, fill=white, inner sep=0pt,minimum size=3pt] (rc21) at ([xshift=4em,yshift=1em]Plant) {};
        \draw [-] (rc21.north east) --node[below,yshift=.55cm]{$T_\rms$} ([xshift=.3cm,yshift=.15cm]rc21.north east) {};

        \node[circle,draw=black, fill=white, inner sep=0pt,minimum size=3pt] (rc11_xi) at ([xshift=5em,yshift=-1em]Plant) {};
        \node[circle,draw=black, fill=white, inner sep=0pt,minimum size=3pt] (rc21_xi) at ([xshift=4em,yshift=-1em]Plant) {};
        \draw [-] (rc21_xi.north east) --
        ([xshift=.3cm,yshift=.15cm]rc21_xi.north east) {};
        
        \node [smallblock, blue, fill = blue!20, below = 2 em of Controller, minimum width=6em] (DMA) {DMA};

        \draw[<-] (Controller.160) -- +(-2,0) node[xshift = 1em, yshift = 0.75em]{$r_k$};
        \draw[->] (rc11) -- +(2,0) node[xshift = -1em, yshift = 0.75em]{$y_k$};        
        \draw[->] (Controller) node[xshift = 6.5em, yshift = 0.75em]{$u_k$} -- (zoh);
        \draw[->] (zoh) node[xshift = 2.6em, yshift = 0.75em]{$u(t)$} -- (Plant);
        \draw[-] (Plant.32) node[xshift = 1em, yshift = 0.75em]{$y(t)$} -- (rc21) ;
        \draw[-] (Plant.-32) node[xshift = 1em, yshift = 0.75em]{$\xi(t)$} -- (rc21_xi) ;

        \draw[->] (rc11) -| +(1.2,-2.5) |- (-2.5,-2.5) |-(Controller.180) 
        node[xshift = -5.5em, yshift = 0.0em]{$y_k$};
        \draw[->,blue] (rc11_xi) node[xshift = 2em, yshift = 0.75em]{$\xi_k$} -| +(0.75,-1.25) |-(DMA.-10);
        
        \draw[blue,->] (Controller.0) -| +(0.4,-1) node[xshift = -0.75em, yshift = 0.1em]{$u_k$} |- (DMA.10);
        \draw[blue,->] (DMA.180) node[xshift = 0.25em, yshift = 2em]{$A_k, B_k$} -| +(-0.5,1)  |- (Controller.200);

    \end{tikzpicture}
    }
        \caption{Dynamic Mode Adaptive Control (DMAC) architecture for model-free, data-driven, and learning-based control of sampled-data systems.        
        }
        \label{fig:DMAC_architecture}
    \end{figure}
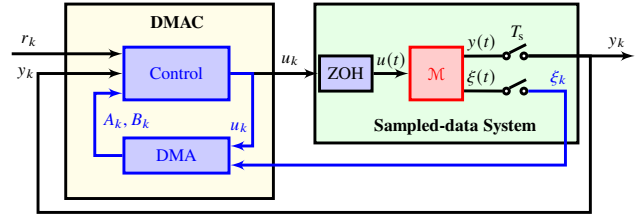
    
Letting $T_\rms>0$ denote the sample time, the system's output is sampled to generate the sampled measurements 
    $y_k \isdef y(k T_\rms). $  
The control input is implemented using a zero-order hold so that, for $t \in [kT_{\rm s},(k+1)T_{\rm s}),$
$u(t) = u_k.$
The measured state signal is similarly sampled to obtain
    $\xi_k \isdef \xi(kT_{\rm s}).$
The objective of the DMAC controller is to generate the discrete-time control input $u_k$ using the measured signals $\xi_k$ and $y_k$ such that the sampled output $y_k$ tracks the reference signal $r_k$.

\subsection{Dynamic Mode Adaptive Control Architecture}

The DMAC architecture is shown in Figure \ref{fig:DMAC_architecture}.
The approach combines online system identification with control synthesis.
At each time step, the DMAC algorithm performs the following two operations.
\begin{enumerate}
\item recursively estimate a local linear model of the system dynamics using the measured signals $\xi_k$ and $u_k$, as described in Section \ref{sec:DynApprox}, and 
\item synthesize a control input that achieves the desired control objective, such as stabilization, reference tracking, or disturbance rejection, as described in Section \ref{sec:controlUpdate}.
\end{enumerate}
This architecture allows the controller to continuously update its internal model of the system dynamics as new data becomes available.
Furthermore, the recursive design enables the controller to adapt to changes in system dynamics without requiring a prior model of the system.

\subsection{Dynamic Mode Approximation}
\label{sec:DynApprox}
%
To synthesize the control input $u_k$, DMAC first estimates a local linear approximation of the system dynamics based on the measured signals $\xi_k$ and $u_k$. 
In particular, DMAC estimates matrices $A \in \BBR^{l_\xi \times l_\xi}$ and $B \in \BBR^{l_\xi \times l_u}$ such that
\begin{align}
    \xi_{k+1} = A \xi_{k} + B u_{k},
    \label{eq:linear_approximation}
\end{align}
which represents a local linear approximation of the dynamics of the measured state.
Note that \eqref{eq:linear_approximation} can be written as
\begin{align}
    \xi_{k+1} = \Theta \phi_k.
    \label{eq:linear_approximation_AxForm_1}
\end{align}
where 
\begin{align}
    \Theta &\isdef \matl A  & B \matr \in \BBR^{l_\xi \times (l_\xi + l_u)}, 
    \\
    \phi_k &\isdef \matl \xi_{k} \\ u_{k} \matr \in \BBR^{l_\xi+l_u}. 
    \label{eq:DMACregressor}
\end{align}
Note that $\Theta$ is the matrix of unknown parameters and $\phi_k$ is the regressor vector. 
The representation \eqref{eq:linear_approximation_AxForm_1} differs from the classical regressor form $y = \phi \theta$, where the unknown parameter $\theta$ is a column vector and the regressor $\phi$ is a matrix. 
In contrast, the unknown parameter in \eqref{eq:linear_approximation_AxForm_1} is the matrix $\Theta$.

Indeed, \eqref{eq:linear_approximation_AxForm_1} can be rewritten in vectorized form as
\begin{align}
    \xi_{k+1} = (\phi_k^\rmT \otimes I)\,\mathrm{vec}(\Theta),
\end{align}
where $\mathrm{vec}(\Theta)$ denotes the stacked parameter vector and $\otimes$ denotes the Kronecker product. 
However, DMAC avoids this reformulation and instead estimates $\Theta$ directly in matrix form. 
This structure-preserving formulation yields the state-space matrices directly, avoids vectorization of the unknown parameters, and leads to a covariance matrix of dimension $(l_\xi+l_u)\times(l_\xi+l_u)$ rather than $l_\xi(l_\xi+l_u)\times l_\xi(l_\xi+l_u)$.

In general, a matrix $\Theta$ that exactly satisfies \eqref{eq:linear_approximation_AxForm_1} may not exist.
Instead, DMAC computes an approximate model by minimizing the exponentially weighted cost function
\begin{align}
    J_k (\Theta)
        \isdef 
            \sum_{i=0}^{k} &\lambda^{k-i} \| \xi_{i} - \Theta \phi_{i-1} \|^2_2 
            +
            \lambda^k \tr (\Theta^\rmT R_\Theta \Theta) ,
    \label{eq:J_k_def}
\end{align}
where
$R_\Theta \in \BBR^{(l_\xi+l_u) \times (l_\xi+l_u)}$ is a positive definite regularization matrix that ensures the existence of the minimizer of \eqref{eq:J_k_def}
and 
$\lambda \in (0,1]$ is a forgetting factor. 
The forgetting factor allows the identification algorithm to emphasize recent data while gradually discounting older measurements.
This property is particularly important for nonlinear or time-varying systems, where a single linear model may only accurately describe the system dynamics locally.
This matrix-valued formulation preserves the state-space structure of the identified model and avoids the dimensional growth associated with vectorized RLS implementations.



\begin{proposition}
    Consider the cost function \eqref{eq:J_k_def}.
    For all $k\geq 0,$ define the minimizer of \eqref{eq:J_k_def} as
    \begin{align}
        \Theta_k 
            \isdef 
                \operatorname*{arg\,min}_{\Theta \in \BBR^{l_\xi \times (l_\xi + l_u)}} J_k(\Theta).
        \label{eq:Theta_k_def}
    \end{align}
    Then, the minimizer $\Theta_k$ satisfies
    \begin{align}
        \Theta_k
            &=
                \Theta_{k-1} 
                +
                \left(
                    \xi_{k} - \Theta_{k-1} \phi_{k-1}
                \right)
                \phi_{k-1}^\rmT \SP_k  
        \label{eq:Theta_k}
            , \\
        \SP_k
            &=
                \lambda \inv \SP_{k-1} 
                -
                \lambda \inv
                \SP_{k-1} \phi_{k-1}
                \Gamma_k \inv 
                \phi_{k-1}^\rmT \SP_{k-1},
        \label{eq:SP_k}
    \end{align}
    where, for all $k \geq 0,$ 
    $\Gamma_k \isdef \lambda  +  \phi_{k-1}^\rmT \SP_{k-1} \phi_{k-1},$ and  
    $\Theta_0 = 0,$
    $\SP_0 \isdef R_\Theta\inv. $
\end{proposition}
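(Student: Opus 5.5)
The argument is the standard derivation of recursive least squares, carried out directly in matrix form. The first step is to compute the minimizer of \eqref{eq:J_k_def} in closed form. Expanding each term as $\|\xi_i - \Theta\phi_{i-1}\|_2^2 = \tr\big((\xi_i-\Theta\phi_{i-1})^\rmT(\xi_i-\Theta\phi_{i-1})\big)$ shows that $J_k$ is a quadratic function of $\Theta$. Its gradient with respect to $\Theta$ is
\begin{align*}
    -2\sum_{i}\lambda^{k-i}(\xi_i-\Theta\phi_{i-1})\phi_{i-1}^\rmT + 2\lambda^k \Theta R_\Theta .
\end{align*}
The term $\Theta^\rmT R_\Theta\Theta$ contributes $\Theta(R_\Theta+R_\Theta^\rmT)$ to the gradient; since $R_\Theta$ is symmetric positive definite, this equals $2\Theta R_\Theta$ after the factor $2$ is pulled out. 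For this reason I would read the regularizer as acting on the right, so that it is consistent with the row-vector structure of $\Theta$.

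Next, define
\begin{align*}
    \SP_k^{-1} \isdef \sum_{i}\lambda^{k-i}\phi_{i-1}\phi_{i-1}^\rmT + \lambda^k R_\Theta,
\end{align*}
which is positive definite. Hence $J_k$ is strictly convex, and its unique minimizer is
\begin{align*}
    \Theta_k = \Big(\sum_i \lambda^{k-i}\xi_i\phi_{i-1}^\rmT\Big)\SP_k .
\end{align*}
At $k=0$ the sum is empty, with the convention that there is no $\phi_{-1}$ term. This gives $\SP_0=R_\Theta^{-1}$ and $\Theta_0=0$, which are the stated initial conditions.

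The recursions follow from two identities. The first is the covariance recursion $\SP_k^{-1}=\lambda\SP_{k-1}^{-1}+\phi_{k-1}\phi_{k-1}^\rmT$. Applying the Sherman--Morrison identity to $(\lambda\SP_{k-1}^{-1}+\phi\phi^\rmT)^{-1}$ gives \eqref{eq:SP_k} directly, with $\Gamma_k=\lambda+\phi_{k-1}^\rmT\SP_{k-1}\phi_{k-1}$. Note that $\Gamma_k>0$ because $\SP_{k-1}$ is positive definite, so the inverse exists. The second identity is the cross-term recursion $S_k\isdef\sum_i\lambda^{k-i}\xi_i\phi_{i-1}^\rmT = \lambda S_{k-1}+\xi_k\phi_{k-1}^\rmT$. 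Substituting $\lambda S_{k-1}=\Theta_{k-1}\lambda\SP_{k-1}^{-1}=\Theta_{k-1}(\SP_k^{-1}-\phi_{k-1}\phi_{k-1}^\rmT)$ gives
\begin{align*}
    \Theta_k = S_k\SP_k = \Theta_{k-1} - \Theta_{k-1}\phi_{k-1}\phi_{k-1}^\rmT\SP_k + \xi_k\phi_{k-1}^\rmT\SP_k ,
\end{align*}
which is exactly \eqref{eq:Theta_k}.

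The main obstacle is bookkeeping rather than any conceptual step. Two points need care. First, the index convention at $k=0$ must match the stated initial conditions; this requires treating the $i=0$ term as absent or zero. Second, the regularizer $\tr(\Theta^\rmT R_\Theta\Theta)$ must be dimensionally consistent: with $\Theta\in\BBR^{l_\xi\times(l_\xi+l_u)}$, it is well-formed only if it is read as $\tr(\Theta R_\Theta\Theta^\rmT)$. I will state this interpretation explicitly and note that $R_\Theta$ is symmetric positive definite.
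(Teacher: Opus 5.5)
Your proposal is correct and follows essentially the same route as the paper's proof (Proposition~\ref{prop:theta_k_recursive} in Appendix~\ref{appndx:matrix_RLS}). Both obtain the closed-form minimizer by setting the gradient to zero, then apply Sherman--Morrison to $\SP_k^{-1}=\lambda\SP_{k-1}^{-1}+\phi_{k-1}\phi_{k-1}^\rmT$ and substitute $\lambda S_{k-1}=\Theta_{k-1}\big(\SP_k^{-1}-\phi_{k-1}\phi_{k-1}^\rmT\big)$ into the cross-term recursion. Your two bookkeeping fixes are the right ones: reading the regularizer as $\tr(\Theta R_\Theta\Theta^\rmT)$ with symmetric $R_\Theta$, and omitting the undefined $i=0$ term. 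They agree with what the appendix derivation effectively uses, since its sum starts at $i=1$ and its $\SX_k$ absorbs $\lambda^k R_\Theta$ on the regressor side.
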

\begin{proof}
    See Proposition \ref{prop:theta_k_recursive} in Appendix \ref{appndx:matrix_RLS}.
\end{proof}

The matrices $A_k$ and $B_k$ used in the control design step are obtained from the estimate $\Theta_k$ as
\begin{align}
\Theta_k =
\begin{bmatrix}
A_k & B_k
\end{bmatrix}.
\end{align}

The cost function \eqref{eq:J_k_def} can be viewed as a matrix-valued extension of the recursive least-squares cost commonly used in engineering applications \cite{goel2020recursive}. 
Although \eqref{eq:linear_approximation_AxForm_1} can be rewritten in vectorized form and analyzed using standard RLS arguments, the matrix formulation used here is advantageous because it directly yields the state-space matrices $A_k$ and $B_k$ without parameter stacking or realization reconstruction. 
In addition, the associated covariance matrix $\SP_k$ depends only on the regressor dimension $l_\xi+l_u$, which is typically much smaller than the dimension required by a vectorized parameterization. 
As shown in \cite{Mareels1986,Mareels1988,goel2020recursive}, persistency of excitation is required to ensure that 
(i) the parameter estimate converges and 
(ii) the covariance matrix $\SP_k$ remains bounded.
%
%
To promote persistency of excitation, the control signal includes an \textit{excitation} component, which is designed to ensure persistence of excitation in the regressor $\phi_k$. 
This modification is discussed in Section \ref{sec:controlUpdate}.


\subsection{Connection with Dynamic Mode Decomposition}

The identification component of DMAC is closely related to the dynamic mode decomposition (DMD) method, which is widely used for data-driven modeling and reduced-order analysis of complex dynamical systems. 
This subsection shows the connection between classical DMD and the identification procedure used in DMAC.

Consider a discrete-time system
\begin{align}
    x_{k+1} = f(x_k, u_k), 
    \label{eq:SS_NL}
\end{align}
where $x_k \in \BBR^{l_x}$ is the state vector, $u_k \in \BBR^{l_u}$ is the input vector, and $f : \BBR^{l_x} \times \BBR^{l_u} \rightarrow \BBR^{l_x}$ represents the system dynamics.
For $k \ge 1$, define the state snapshot matrix and the input snapshot matrix
\begin{align}
    X_k
        &\isdef
        \matl 
        x_1 & x_2 & \cdots & x_k
        \matr
        \in \BBR^{l_x \times k},
    \\
    U_k
        &\isdef
        \matl 
        u_1 & u_2 & \cdots & u_k
        \matr
        \in \BBR^{l_u \times k}.
\end{align}
The objective of DMD with inputs is to determine matrices
$A \in \BBR^{l_x \times l_x}$ and
$B \in \BBR^{l_x \times l_u}$ such that
\begin{align}
    X_{k+1} = A X_k + B U_k .
    \label{eq:linear_approximation_DMD}
\end{align}
Equation \eqref{eq:linear_approximation_DMD} can be written in the compact form
\begin{align}
    X_{k+1} = \Theta \SX_k ,
    \label{eq:linear_approximation_AxForm}
\end{align}
where
\begin{align}
    \Theta &\isdef \matl A & B \matr 
        \in \BBR^{l_x \times (l_x + l_u)},
    \\
    \SX_k &\isdef \matl X_k \\ U_k \matr
        \in \BBR^{(l_x + l_u) \times k}.
\end{align}
In classical DMD, the matrix $\Theta$ is then obtained by minimizing the regularized least-squares cost
\begin{align}
    J_{{\rm dmd},k} (\Theta)
        \isdef 
        \| X_{k+1} - \Theta \SX_k \|_\rmF^2
        + \tr (\Theta^\rmT R_\Theta \Theta),
    \label{eq:J_def_DMD}
\end{align}
where $\|M\|_\rmF \isdef \tr(MM^\rmT)$ denotes the Frobenius norm and
$R_\Theta \in \BBR^{(l_x+l_u) \times (l_x+l_u)}$ is a positive definite regularization matrix \cite{strang2022introduction}.

Note that the Frobenius norm term can be expanded as
\begin{align}
\| X_{k+1} - \Theta \SX_k \|_\rmF^2
    =
    \sum_{i=0}^{k}
    \| x_{i+1} - \Theta \chi_i \|_2^2,
\end{align}
where
\begin{align}
    \chi_i \isdef
    \matl
    x_i \\ u_i
    \matr
    \in \BBR^{l_x + l_u}.
\end{align}
Therefore, the cost function minimized in classical DMD is equivalent to the least-squares cost used in the DMAC identification procedure. 
Consequently, the identification step of DMAC can be interpreted as a recursive implementation of dynamic mode decomposition.

\subsection{Control Law Update}
\label{sec:controlUpdate}

This subsection describes the method used to compute the control input $u_k$ using the dynamics approximation obtained in Section \ref{sec:DynApprox}.
To track the reference signal $r_k$, the DMAC algorithm employs a full-state feedback controller with integral action, described in Appendix \ref{sec:FSFi}. 
Note that the term full-state refers to the measured state $\xi_k$ rather than the internal system state $x_k$.

In particular, the control law is given by
\begin{align}
    u_k = K_{\xi,k} \xi_k + K_{q,k} q_k + v_k,
    \label{eq:DMAC_control}
\end{align}
where $K_{\xi,k} \in \BBR^{l_u \times l_\xi}$ and $K_{q,k} \in \BBR^{l_u \times l_y}$ are the time-varying state-feedback and integral gains, respectively. 
These gains are computed using the procedure described in Appendix \ref{sec:FSFi}. 
The control architecture is shown in Figure \ref{fig:FSF_control_loop_int}.
The \textit{excitation signal} $v_k $ is added to the control input to promote persistency of excitation in the regressor $\phi_k$ used in the dynamic mode approximation step.
In particular, Proposition \ref{prop:exciting_input} shows that, in the context of a linear system, an i.i.d., zero-mean input signal ensures persistency of excitation.

\begin{figure}[h]
    \centering
    {
    \begin{tikzpicture}[auto, node distance=2cm,>=latex',text centered]
    
        \node at (-3,0) (reference) {$r$};
        \node[sum, right = 1.5 em of reference] (sum) {};
        \node [smallblock, right = 1.5 em of sum] (integrator) {$\dfrac{\shiftq}{\shiftq-1}$};
        
        \node [smallblock, fill = red!20 ,right = 1.5 em of integrator] (IntGain) {$K_{q,k}$};
        \node[sum, right = 1.5 em of IntGain] (sum1) {};
        
        \node [smallblock, right = 1.5 em of sum1] (Plant) {System};
        \node [smallblock, fill = red!20, below = 1 em of Plant] (FSFG) {$K_{\xi,k}$};
        \node[right = 3.5 em of Plant] (output) {$y$};
        
        \draw[->] (reference) -- (sum) node[xshift = -0.5em, yshift = -1em]{$-$};;
        \draw[->] (sum) -- (integrator) -- (IntGain) 
            node[xshift = -2.2em, yshift = 0.8em]{$q_k$}
            -- (sum1) -- node[xshift = 0em, yshift = 0em]{$u_k$} (Plant)  -- (output);
        \draw[->] (Plant.340) -- +(0.5,0) 
                    |-
                    node[xshift = -0.7em, yshift = 1.4em]{$\xi_k$}
                    (FSFG);

                    
        \draw[->] (FSFG.180) -| (sum1.270) node[xshift = -0.5em, yshift = -0.5em]{$+$};
        
        \draw[->] (Plant.east) -- +(.5,0)  -| +(.75,-1.5)
                    -| (sum.270);
        
    \end{tikzpicture}
    }
        \caption{Time-varying full-state feedback controller with integral action for reference tracking, where the gains $K_{\xi,k}$ and $K_{q,k}$ are updated using the dynamic approximation obtained by DMAC. }
        \label{fig:FSF_control_loop_int}
    \end{figure}
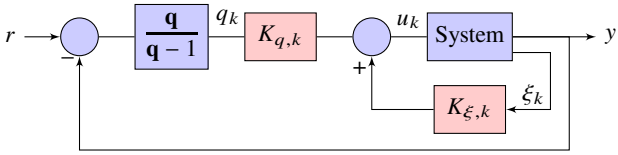

\subsection{DMAC Algorithm}
\label{sec:DMAC_algorithm}

The dynamic mode adaptive control (DMAC) algorithm operates recursively by combining online system identification with control synthesis. 
At each time step $k$, the algorithm updates the local linear approximation of the system dynamics using the measured signals and then computes a control input using the estimated model.

The steps of the DMAC algorithm are summarized as follows.

\begin{enumerate}

\item Measure the signals $\xi_k$ and $y_k$ and compute the reference error $e_k = r_k - y_k$ and the integrator state given by \eqref{eq:integrator_state}.

\item Form the regressor vector \eqref{eq:DMACregressor}.

\item Update the parameter estimate $\Theta_k$ using the recursive least-squares update \eqref{eq:Theta_k}.

\item Extract the matrices
$
\Theta_k =
\begin{bmatrix}
A_k & B_k
\end{bmatrix}.
$

\item Compute the control gains $K_{\xi,k}$ and $K_{q,k}$ using the control design procedure described in Appendix \ref{sec:FSFi}.

\item Compute the control input using \eqref{eq:DMAC_control}.

\end{enumerate}
In summary, the DMAC algorithm recursively estimates a local linear approximation of the system dynamics and uses this approximation to synthesize a stabilizing control input at each time step.

\section{Convergence and Closed-Loop Stability Analysis}
\label{sec:stability}

This section analyzes the theoretical properties of the DMAC algorithm described in Section \ref{sec:DMAC_algorithm}. 
First, we analyze the convergence properties of the recursive least-squares (RLS) algorithm used in the dynamic mode approximation step to estimate the local linear model of the system dynamics. 
Next, we analyze the behavior of the closed-loop system when the estimated dynamics are used for control synthesis. 
In particular, we establish conditions under which the parameter estimates remain well-behaved, and the closed-loop state remains bounded under the DMAC control law.

To facilitate the analysis, we consider the case in which the underlying system dynamics are linear, and the control objective is to stabilize the plant. 
In particular, we consider the discrete-time linear system
\begin{align}
\xi_{k+1} = A^\star \xi_k + B^\star u_k ,
\label{eq:TrueDynamics}
\end{align}
where $A^\star \in \BBR^{l_\xi \times l_\xi}$ and $B^\star \in \BBR^{l_\xi \times l_u}$ are constant matrices such that the pair $(A^\star, B^\star)$ is stabilizable. 
Under this assumption, the dynamic mode approximation described in Section \ref{sec:DynApprox} can recover the true system dynamics under suitable excitation conditions. 
The analysis below establishes convergence properties of the recursive identification step and boundedness of the closed-loop system under the DMAC control law.

\subsection{Convergence of the Recursive Dynamic Mode Approximation}
To analyze the convergence of the dynamic mode approximation, we first express the system dynamics in the regressor form used by the recursive identification algorithm.
In particular, note that \eqref{eq:TrueDynamics} can be written as
\begin{align}
    \xi_{k+1} = \Theta^\star \phi_k.
    \label{eq:dmac_local_model}
\end{align}
where
\begin{align}
    \Theta^\star \isdef [A^\star \  B^\star] \in \BBR^{n \times p}, 
    \quad 
    \phi_k \isdef \matl\xi_k \\ u_k\matr \in \BBR^{p}.
    \label{eq:regressor_def}
\end{align}
The matrix $\Theta$ contains the system matrices $A^\star$ and $B^\star$, and the \textit{regressor} $\phi_k$ contains the measured state and the measured input. 
With the measured regressor, the matrix parameter $\Theta$, at each $k,$ is obtained by minimizing $J_k(\Theta)$ given by \eqref{eq:J_k_def}.
The minimizer of \eqref{eq:J_k_def} is recursively computed by \eqref{eq:Theta_k} and \eqref{eq:SP_k}.

The following definition reviews the concept of persistence of excitation \cite{goel2020recursive}, which is required to establish the stability of the recursive dynamic mode approximation algorithm.

\begin{definition}
[Persistence of excitation]
\label{def:pe}
Let $\{\phi_k\}_{k\ge0}$ be a sequence of regressors with $\phi_k \in \BBR^{p}$.
The sequence $\{\phi_k\}$ is called persistently exciting, if there exist constants $\alpha,\beta>0$ and an integer $N > p$ such that, for all $k\ge 0$
\begin{align}
    \alpha I_{p}  \le  \sum_{i=k}^{k+N} \phi_i \phi_i^\rmT  \le  \beta I_{p}.
    \label{eq:Pk_bounds}
\end{align}

\end{definition}

Under the persistence-of-excitation condition, the covariance matrix generated by the RLS update has the following properties.
\begin{proposition}
\label{prop:P_pos_def}
    Consider the covariance update \eqref{eq:SP_k}.
    Let $\phi_k$ be a persistently exciting regressor. 
    Then, for all $k \ge 0,$ $\SP_k$ satisfies
    \begin{align}
    \label{eq:SPinv_rec}
        \SP_{k+1}\inv = \lambda \SP_{k}\inv + \phi_k \phi_k^\rmT.
    \end{align}
    Furthermore, for all $k\ge0$,  $\SP_k$ is positive definite and bounded.
\end{proposition}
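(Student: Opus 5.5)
The plan is to first derive the information-form recursion \eqref{eq:SPinv_rec} algebraically from \eqref{eq:SP_k}, then obtain positive definiteness by induction, and finally use the closed form of $\SP_k\inv$ together with Definition \ref{def:pe} to bound $\SP_k$.

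\emph{Step 1 (inverse recursion).} Shifting the index in \eqref{eq:SP_k} gives
$\SP_{k+1} = \lambda\inv\SP_k - \lambda\inv\SP_k\phi_k(\lambda+\phi_k^\rmT\SP_k\phi_k)\inv\phi_k^\rmT\SP_k$.
This is exactly the Sherman--Morrison--Woodbury formula applied to $(\lambda\SP_k\inv + \phi_k\phi_k^\rmT)\inv$, written with the rank-one update $\phi_k \cdot 1 \cdot \phi_k^\rmT$. To verify it, I would multiply the right-hand side by $\lambda\SP_k\inv+\phi_k\phi_k^\rmT$ and check that the product is $I$, using $\Gamma_{k+1}=\lambda+\phi_k^\rmT\SP_k\phi_k$. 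This step requires $\SP_k$ to be invertible, so I would run it jointly with Step 2 as a single induction.

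\emph{Step 2 (positive definiteness).} The base case is $\SP_0=R_\Theta\inv\succ0$. Assume $\SP_k\succ0$. Then $\Gamma_{k+1}\ge\lambda>0$, so the formula in Step 1 is well defined, and it yields $\SP_{k+1}\inv=\lambda\SP_k\inv+\phi_k\phi_k^\rmT\succ0$, since this is the sum of a positive definite matrix and a positive semidefinite one. Hence $\SP_{k+1}\succ0$. Unrolling the recursion then gives the closed form
\begin{align}
\SP_k\inv = \lambda^k R_\Theta + \sum_{i=0}^{k-1}\lambda^{k-1-i}\phi_i\phi_i^\rmT. \nn
\end{align}

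\emph{Step 3 (boundedness).} Here the main task is a uniform upper bound on $\SP_k$, equivalently a uniform positive lower bound on $\SP_k\inv$. I would split into two cases.
\begin{itemize}
\item For $k\le N+1$, drop the sum to get $\SP_k\inv\succeq\lambda^{k}R_\Theta\succeq\lambda^{N+1}\lambda_{\min}(R_\Theta)I$.
\item For $k> N+1$, drop everything except the last $N+1$ regressors $i=k-N-1,\dots,k-1$. Each of their weights is at least $\lambda^N$, so the lower inequality in \eqref{eq:Pk_bounds} gives $\SP_k\inv\succeq\lambda^N\alpha I$.
\end{itemize}
Combining the two cases,
$\SP_k\preceq \big(\min\{\lambda^{N+1}\lambda_{\min}(R_\Theta),\,\lambda^N\alpha\}\big)\inv I$ for all $k$.
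This holds for every $\lambda\in(0,1]$, including $\lambda=1$.

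If one also wants $\SP_k$ bounded away from zero (uniform positive definiteness), I would use the upper PE bound. Group the sum into consecutive blocks of length $N+1$; each block is weighted by at most $\lambda^{j(N+1)}$ and has Gram matrix $\preceq\beta I$. This gives $\SP_k\inv\preceq\big(\lambda_{\max}(R_\Theta)+\beta/(1-\lambda^{N+1})\big)I$ for $\lambda<1$.

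The main obstacle I expect is the case $\lambda=1$. There $\SP_k\inv$ grows without bound, so only the upper bound on $\SP_k$ (positive definite and bounded above) can hold. I would therefore state boundedness as $\SP_k\preceq cI$ for all $\lambda\in(0,1]$, and present the lower bound only for $\lambda<1$. A second point needing care is the index bookkeeping between the block window in \eqref{eq:Pk_bounds} and the exponents $\lambda^{k-1-i}$ in the closed form.
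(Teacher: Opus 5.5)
Your proposal is correct, and it proves the statement by a different and more careful route than the paper at the boundedness step. Steps 1 and 2 match the paper: Sherman--Morrison--Woodbury, then induction to get positive definiteness and the unrolled form $\SP_{k+1}\inv = \lambda^{k+1}\SP_0\inv + \sum_{i=0}^{k}\lambda^{k-i}\phi_i\phi_i^\rmT$. Running the two as a single induction is slightly cleaner, because the paper invokes $\SP_k\inv$ in the SMW step before it has established invertibility.

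Step 3 is where you genuinely differ. The paper uses only the upper persistence-of-excitation constant, via $\|\phi_k\|\le\bar\phi$, to obtain $\SP_{k+1}\inv \le \SP_0\inv + \bar\phi^2(1-\lambda)\inv I$ for $\lambda<1$. That inequality bounds $\SP_k$ away from zero, not from above, even though the text reads it as ``$\SP_{k+1}$ is bounded from above.'' For $\lambda=1$, the paper uses the lower constant $\alpha$ to show that $\SP_k\inv$ grows without bound, so $\SP_k\to 0$.

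Your sliding window over the last $N+1$ regressors uses $\alpha$ to give a uniform lower bound on $\SP_k\inv$, together with $\lambda^{N+1}\lambda_{\min}(R_\Theta)$ for the initial steps. This yields the uniform bound $\SP_k\preceq cI$ that ``bounded'' actually requires, and it holds for every $\lambda\in(0,1]$. Your blockwise use of $\beta$ then recovers the paper's lower bound on $\SP_k$ for $\lambda<1$, with $\beta/(1-\lambda^{N+1})$ in place of $\bar\phi^2/(1-\lambda)$.

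So your route gives a complete proof of the claim as stated, with explicit constants, and it relies on the lower excitation bound exactly where it is needed. The paper's route adds the explicit conclusion that the covariance collapses to zero when $\lambda=1$, which you note only qualitatively.
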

\begin{proof}
    See Section \ref{proof:P_pos_def} in Appendix \ref{appndx:proof}.
\end{proof}

\begin{proposition}
\label{prop:Theta_tilde_update}
Define the matrix parameter error
\begin{align}
    \label{eq:estimation_error}
    \widetilde{\Theta}_k \isdef \Theta_k - \Theta^\star.    
\end{align}
Then, for all $k \geq 0,$  $\widetilde{\Theta}_{k+1}$ satisfies
\begin{align}
    \widetilde{\Theta}_{k+1}
    &=
    \widetilde\Theta_k\big(I - \phi_k\phi_k^\rmT \SP_{k+1}\big)
    \label{eq:est_error_update_v2}
    =
        \lambda \widetilde\Theta_k \SP_{k}\inv \SP_{k+1}.
\end{align}
\end{proposition}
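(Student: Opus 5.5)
Starting from the recursive update \eqref{eq:Theta_k} with the index shifted by one, we have
\begin{align}
    \Theta_{k+1} = \Theta_k + \left(\xi_{k+1} - \Theta_k \phi_k\right)\phi_k^\rmT \SP_{k+1}.
\end{align}
The first step is to substitute the true dynamics in the regressor form \eqref{eq:dmac_local_model}, $\xi_{k+1} = \Theta^\star \phi_k$. The innovation term then becomes $\Theta^\star\phi_k - \Theta_k\phi_k = -\widetilde\Theta_k \phi_k$. Subtracting $\Theta^\star$ from both sides yields
\begin{align}
    \widetilde\Theta_{k+1} = \widetilde\Theta_k - \widetilde\Theta_k \phi_k\phi_k^\rmT \SP_{k+1} = \widetilde\Theta_k\big(I - \phi_k\phi_k^\rmT\SP_{k+1}\big),
\end{align}
which is the first equality in \eqref{eq:est_error_update_v2}. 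This step is purely algebraic and requires no assumption beyond the exactness of the linear model \eqref{eq:TrueDynamics}.

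For the second equality, the plan is to use the information-form recursion \eqref{eq:SPinv_rec} from Proposition \ref{prop:P_pos_def}. That result guarantees that $\SP_{k+1}$ is invertible and that $\phi_k\phi_k^\rmT = \SP_{k+1}\inv - \lambda\SP_k\inv$. Substituting this into $I - \phi_k\phi_k^\rmT\SP_{k+1}$ gives
\begin{align}
    I - \big(\SP_{k+1}\inv - \lambda \SP_k\inv\big)\SP_{k+1} = I - I + \lambda\SP_k\inv\SP_{k+1} = \lambda \SP_k\inv \SP_{k+1},
\end{align}
which completes the identity.

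The main point requiring care is justifying \eqref{eq:SPinv_rec} and the invertibility of $\SP_k$ and $\SP_{k+1}$ for every $k\ge0$. If one prefers not to rely on persistent excitation here, I would prove \eqref{eq:SPinv_rec} directly from \eqref{eq:SP_k} by induction:
\begin{enumerate}
\item Start from $\SP_0 = R_\Theta\inv \succ 0$.
\item Assume $\SP_k \succ 0$. Then $\Gamma_{k+1} = \lambda + \phi_k^\rmT\SP_k\phi_k > 0$, and \eqref{eq:SP_k} is exactly the Sherman--Morrison--Woodbury formula for $(\lambda\SP_k\inv + \phi_k\phi_k^\rmT)\inv$.
\item Since $\lambda\SP_k\inv + \phi_k\phi_k^\rmT$ is positive definite, $\SP_{k+1} \succ 0$.
\end{enumerate}
This makes the derivation valid for all $k\ge0$ and all regressor sequences. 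Persistence of excitation is then needed only for the boundedness claims, not for the error recursion.
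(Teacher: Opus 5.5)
Your proof is correct and follows the paper's argument essentially verbatim: you substitute $\xi_{k+1}=\Theta^\star\phi_k$ into the index-shifted update \eqref{eq:Theta_k} to get the first equality, then eliminate $\phi_k\phi_k^\rmT$ using \eqref{eq:SPinv_rec} to get the second. Your side remark is also right: \eqref{eq:SPinv_rec} and the positive definiteness of $\SP_k$ follow from Sherman--Morrison--Woodbury and induction alone, without persistent excitation, and this is exactly what the first part of the paper's proof of Proposition \ref{prop:P_pos_def} uses.
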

\begin{proof}
    See Section \ref{proof:Theta_tilde_update} in Appendix \ref{appndx:proof}.
\end{proof}

The next result uses Theorems 3, 4, and 5 from \cite{goel2020recursive} to prove stability of the equilibrium $\widetilde \Theta_k=0.$

\begin{theorem}
\label{thm:matrix_rls_stability}
Consider the system
\eqref{eq:est_error_update_v2}-\eqref{eq:SP_k}.
Let $\SP_0$ be positive definite.
If $\lambda =1,$ then the equilibrium of \eqref{eq:est_error_update_v2}, that is, $\widetilde \Theta_k=0$, is Lyapunov stable.
If $\lambda \in (0,1),$ then the equilibrium of \eqref{eq:est_error_update_v2} is globally geometrically stable.
\end{theorem}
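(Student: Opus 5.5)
Lyapunov stability for $\lambda=1$ and global geometric stability for $\lambda\in(0,1)$ both follow from a time-varying Lyapunov function built from the inverse covariance. Since \eqref{eq:est_error_update_v2} is linear in $\widetilde\Theta_k$, it acts row by row. For each row $\tilde\theta_k^\rmT$ of $\widetilde\Theta_k$ we have
\begin{align}
\tilde\theta_{k+1}^\rmT = \lambda\tilde\theta_k^\rmT \SP_k\inv\SP_{k+1},
\end{align}
which is exactly the vector RLS error dynamics analyzed in \cite{goel2020recursive}. The plan is to verify that the hypotheses of Theorems 3, 4, and 5 of \cite{goel2020recursive} hold row-wise, or equivalently to reproduce their argument in matrix form with the candidate
\begin{align}
V_k \isdef \tr\big(\widetilde\Theta_k \SP_k\inv \widetilde\Theta_k^\rmT\big).
\end{align}

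The key computation comes first. Using Proposition \ref{prop:Theta_tilde_update}, write $\widetilde\Theta_{k+1} = \lambda\widetilde\Theta_k\SP_k\inv\SP_{k+1}$. Then
\begin{align}
V_{k+1} = \lambda^2\tr\big(\widetilde\Theta_k\SP_k\inv\SP_{k+1}\SP_k\inv\widetilde\Theta_k^\rmT\big).
\end{align}
Substituting $\SP_{k+1}\inv = \lambda\SP_k\inv + \phi_k\phi_k^\rmT$ from Proposition \ref{prop:P_pos_def} and the matrix inversion lemma should give the identity
\begin{align}
V_{k+1} - \lambda V_k = -\lambda\,\tr\big(\widetilde\Theta_k\phi_k\,\Gamma_{k+1}\inv\,\phi_k^\rmT\widetilde\Theta_k^\rmT\big) \le 0 .
\end{align}
Here $\Gamma_{k+1} = \lambda+\phi_k^\rmT\SP_k\phi_k>0$ by positive definiteness of $\SP_k$. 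The simplest route is to check this identity for each row and sum the rows, since $V_k$ is the sum of the row quadratic forms $\tilde\theta_k^\rmT\SP_k\inv\tilde\theta_k$. This yields $V_{k+1}\le\lambda V_k$ for all $k$.

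For $\lambda=1$, $V_k$ is nonincreasing. Since $\SP_k\inv = \SP_0\inv + \sum_{i<k}\phi_i\phi_i^\rmT \ge \SP_0\inv$, we get
\begin{align}
\lambda_{\min}(\SP_0\inv)\|\widetilde\Theta_k\|_\rmF^2 \le V_k \le V_0 \le \lambda_{\max}(\SP_0\inv)\|\widetilde\Theta_0\|_\rmF^2 .
\end{align}
This gives $\|\widetilde\Theta_k\|_\rmF \le \sqrt{\kappa(\SP_0)}\,\|\widetilde\Theta_0\|_\rmF$, where $\kappa(\SP_0)$ is the condition number, which is Lyapunov stability. No persistency of excitation is needed for this case.

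For $\lambda\in(0,1)$, iterating gives $V_k\le\lambda^kV_0$. To convert this into a bound on $\widetilde\Theta_k$, I need a uniform lower bound $\SP_k\inv\ge c I$ with $c>0$ independent of $k$; the upper bound on $\SP_0\inv$ handles $V_0$. This lower bound is the main obstacle: without excitation, $\lambda^k\SP_0\inv$ decays and $\SP_k\inv$ could degenerate. I would obtain it from the persistency of excitation in Definition \ref{def:pe}, as in the boundedness claim of Proposition \ref{prop:P_pos_def}. For $k\ge N+1$,
\begin{align}
\SP_k\inv \ge \sum_{i=k-N-1}^{k-1}\lambda^{k-1-i}\phi_i\phi_i^\rmT \ge \lambda^{N}\alpha I .
\end{align}
For $k\le N$, we have $\SP_k\inv\ge\lambda^N\SP_0\inv$. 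Taking $c=\lambda^N\min\{\alpha,\lambda_{\min}(\SP_0\inv)\}$ then gives
\begin{align}
\|\widetilde\Theta_k\|_\rmF^2 \le c\inv\lambda_{\max}(\SP_0\inv)\,\lambda^k\,\|\widetilde\Theta_0\|_\rmF^2 ,
\end{align}
which is global geometric stability with rate $\sqrt\lambda$. Since the error dynamics are linear in $\widetilde\Theta_k$, the bound holds for every initial condition, so the result is global. This matches Theorems 4 and 5 of \cite{goel2020recursive} applied to each row.
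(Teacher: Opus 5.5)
Your proposal is correct and follows the paper's proof: the same Lyapunov function $V_k=\tr(\widetilde\Theta_k\SP_k\inv\widetilde\Theta_k^\rmT)$ and the same decrease $V_{k+1}\le\lambda V_k$, obtained from $\widetilde\Theta_{k+1}=\lambda\widetilde\Theta_k\SP_k\inv\SP_{k+1}$. Your version fills in steps the paper leaves implicit: you write the correction term explicitly as $\lambda\Gamma_{k+1}\inv\|\widetilde\Theta_k\phi_k\|^2\ge 0$, whereas the paper only asserts that a trace involving the nonsymmetric $\SP_{k+1}\SP_k\inv$ is nonnegative, and you supply the persistency-of-excitation lower bound $\SP_k\inv\ge cI$ that is needed to turn $V_k\le\lambda^kV_0$ into geometric decay of $\widetilde\Theta_k$ (the theorem statement omits this hypothesis, and the result fails without it).
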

\begin{proof}
    See Section \ref{proof:matrix_rls_stability} in Appendix \ref{appndx:proof}.
\end{proof}

Theorem \ref{thm:matrix_rls_stability} shows that the estimation error dynamics associated with the recursive least-squares update are stable.

\subsection{Closed-Loop Boundedness under DMAC}
The convergence properties of the dynamic mode approximation established above are used in the following subsection to analyze the behavior of the closed-loop system under the DMAC controller.
In particular, the next theorem proves the boundedness of the state $\xi_k$ with the proposed DMAC controller in the loop. 

\begin{theorem}
\label{thm:closed_loop_stability_new}

Consider the system \eqref{eq:TrueDynamics},
    where the pair $(A^\star ,B^\star )$ is stabilizable.
    Consider the control law
    \begin{align}
        u_k = K_k \xi_k + v_k,
        \label{eq:dmac_control}
    \end{align}
where
\begin{enumerate}
    \item $(A_k,B_k)$ are estimates of $(A^\star, B^\star)$ computed by the matrix RLS estimator with forgetting factor $\lambda \in (0,1)$, 
    
    \item $K_k$ is the infinite-horizon discrete-time LQR gain associated with 
    the stabilizing solution of the algebraic Riccati equation for the 
    $(A_k,B_k)$ and weighting matrices 
    $Q \ge 0$, $R > 0$,

    \item the excitation signal $v_k$ is i.i.d., zero-mean, with covariance $R_v > 0$, and satisfies the uniform bound
    \begin{align}
        \|v_k\| \le \bar v, 
        \qquad \forall k \ge 0 .
    \end{align} 

\end{enumerate}
Assume that the RLS estimate $(A_k, B_k)$ is stabilizable and $(A_k,Q^{1/2})$ is detectable for all sufficiently large $k,$ so that the corresponding ARE solution exists and $K_k$ is well-defined.
Then there exists a constant $c > 0$, depending on the system and Lyapunov constants, such that the closed-loop state $\xi_k$ is \emph{uniformly ultimately bounded}, and satisfies
\begin{align}
    \limsup_{k \to \infty} \|\xi_k\|
     \le 
    c \, \bar v.
\end{align}

\end{theorem}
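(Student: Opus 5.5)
The plan is a certainty-equivalence perturbation argument: the closed loop is treated as a nominally stable time-varying system driven by a parameter-error term that vanishes geometrically, plus the bounded excitation $v_k$.

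\textbf{Step 1: rewrite the closed loop.} Substituting \eqref{eq:dmac_control} into \eqref{eq:TrueDynamics} and adding and subtracting the estimated model gives
\begin{align*}
\xi_{k+1} = (A_k + B_k K_k)\xi_k + \big(\widetilde A_k + \widetilde B_k K_k\big)\xi_k + B^\star v_k,
\end{align*}
where $[\widetilde A_k\ \widetilde B_k] = -\widetilde\Theta_k$.

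\textbf{Step 2: convergence of the estimates.} Proposition~\ref{prop:exciting_input} gives persistency of excitation of $\phi_k$ from the i.i.d.\ excitation. Theorem~\ref{thm:matrix_rls_stability} with $\lambda\in(0,1)$ then yields $\|\widetilde\Theta_k\| \le c_0\rho^k$ for some $\rho\in(0,1)$. Hence $(A_k,B_k)\to(A^\star,B^\star)$.

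\textbf{Step 3: convergence of the gains and nominal stability.} Because the stabilizing ARE solution depends continuously on the data wherever it exists, $P_k \to P^\star$ and $K_k\to K^\star$. Here $P^\star$ and $K^\star$ are the Riccati solution and LQR gain for $(A^\star,B^\star)$, and $A^\star+B^\star K^\star$ is Schur. I would then take $V(\xi)=\xi^\rmT P^\star \xi$, or the solution of a discrete Lyapunov equation $S - (A^\star+B^\star K^\star)^\rmT S (A^\star+B^\star K^\star) = I$ if $Q$ is only semidefinite, so that
\begin{align*}
V\big((A^\star+B^\star K^\star)\xi\big) \le (1-\mu)V(\xi)
\end{align*}
for some $\mu\in(0,1)$.

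\textbf{Step 4: Lyapunov estimate for the actual loop.} Write the actual closed-loop matrix as $A^\star + B^\star K_k = (A^\star+B^\star K^\star) + B^\star(K_k-K^\star)$. Since $K_k \to K^\star$, for $k\ge k_0$ the perturbation satisfies $\|B^\star(K_k-K^\star)\|\le\epsilon$, with $\epsilon$ small. Expanding $V(\xi_{k+1})$ and using Young's inequality then gives
\begin{align*}
V(\xi_{k+1}) \le (1-\mu/2)V(\xi_k) + c_1\bar v^2 .
\end{align*}
Iterating yields $\limsup_k V(\xi_k) \le 2c_1\bar v^2/\mu$, and therefore
\begin{align*}
\limsup_{k\to\infty}\|\xi_k\| \le c\,\bar v, \qquad c = \sqrt{2c_1/(\mu\,\lambda_{\min}(S))}.
\end{align*}
For $k<k_0$ the state can grow by at most a finite factor, because $K_k$ is bounded on a finite horizon. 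This gives uniform ultimate boundedness.

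\textbf{Main obstacle.} The delicate point is Step 2's use of persistency of excitation in closed loop. The regressor $\phi_k = [\xi_k;\,K_k\xi_k+v_k]$ depends on the state, whose boundedness is exactly what we are proving, so the upper bound $\beta$ in \eqref{eq:Pk_bounds} is circular. My first attempt would be to argue that the RLS error is nonincreasing in the $\SP_k^{-1}$-weighted norm regardless of excitation, which follows from Proposition~\ref{prop:Theta_tilde_update}. Combined with the lower PE bound, which holds because $v_k$ has covariance $R_v>0$ and enters the input channel independently of $\xi_k$, this gives decay of $\widetilde\Theta_k$ weighted against $\|\xi_k\|$. The one-step regression error $\widetilde\Theta_k\phi_k$ would then be bounded via a small-gain argument before invoking Step 4.

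Failing that, I would assume PE of $\phi_k$ directly, in the spirit of the stabilizability and detectability hypotheses already imposed on the estimates. I would also restrict the statement to hold almost surely, since $v_k$ is random.
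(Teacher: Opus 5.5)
Your proposal follows essentially the same route as the paper: excitation via Proposition~\ref{prop:exciting_input}, geometric decay of $\widetilde\Theta_k$ from Theorem~\ref{thm:matrix_rls_stability}, $K_k\to K^\star$ via Proposition~\ref{prop:lqr_gain_convergence_bound}, and a quadratic Lyapunov function for $A^\star+B^\star K^\star$ perturbed by $B^\star(K_k-K^\star)\xi_k$ and $B^\star v_k$, with Young's inequality on the cross terms. (The paper's composite function also carries the RLS term $V_k$, but that term plays no role in the state bound.) Your contraction form $V(\xi_{k+1})\le(1-\mu/2)V(\xi_k)+c_1\bar v^2$ gives the $\limsup$ bound more rigorously than the paper's ``enter the ball, then remain'' step, which does not justify invariance. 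The closed-loop persistency-of-excitation circularity you flag is also unresolved in the paper, which simply cites Proposition~\ref{prop:exciting_input} (stated for a Schur matrix driven by an i.i.d.\ input) at that step, so your proposal is correct to the same standard as the paper's proof.
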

\begin{proof}
    See Section \ref{proof:closed_loop_stability_new} in Appendix \ref{appndx:proof}.
\end{proof}

Theorem \ref{thm:closed_loop_stability_new} shows that, under the proposed DMAC architecture, the closed-loop state remains bounded despite the presence of the excitation signal required for system identification.

\section{Numerical Examples}
\label{sec:exmp}
This section illustrates the performance of the proposed DMAC algorithm on three representative dynamical systems of increasing complexity. 
The examples are selected to demonstrate DMAC's ability to stabilize unstable linear systems, regulate nonlinear oscillatory dynamics, and control distributed-parameter systems.
The first example considers an unstable linear system.
This example serves two purposes: it provides numerical verification of the theoretical results developed in Section \ref{sec:DMAC}, and it illustrates the basic operation of the DMAC algorithm.
The second and third examples consider increasingly complex systems, namely the nonlinear Van der Pol oscillator and the Burgers equation.
These examples demonstrate DMAC's ability to regulate nonlinear and distributed-parameter systems via online dynamic mode approximation.


\subsection{Stabilization of an Unstable Linear System}
\label{exmp:linear_unstable}
This example verifies the theoretical properties established in Section \ref{sec:DMAC}.
In particular, the example illustrates the convergence behavior of the recursive dynamic mode approximation and the boundedness of the closed-loop state under the DMAC controller.
Consider the LTI system \eqref{eq:TrueDynamics}, where
\begin{align}
    A^\star &=
    \begin{bmatrix}
        1.05 & 0.25\\
        -0.1 & 0.98
    \end{bmatrix},
    \quad 
    B^\star =
    \begin{bmatrix}
        0.12\\
        0.25
    \end{bmatrix}.
    \label{eq:unstableLinear}
\end{align}
Note that the eigenvalues of $A^\star $ are $ \{1.015 \pm 0.1542 \jmath \},$ and thus the open-loop system is unstable. 
The objective is to stabilize the unstable system.

\textbf{DMAC Setup.}
In this example, we assume that the full state is available for controller synthesis. 
Since $\xi_k \in \BBR^2$ and $u_k \in \BBR,$ it follows that $\Theta_k$ is a $2 \times 3$ matrix. 
In DMAC, we set 
$\Theta_0 = 0_{2\times 3}, 
    \SP_0 = 10^3 I_3, $ and $
    \lambda = 0.995,$
to estimate $\Theta_k=[A_k\ \ B_k] $ using the recursive DMA.
The control gain $K_k$ is computed using LQR with weights
$Q_{\mathrm{lqr}} = I_2, 
R_{\mathrm{lqr}} = 0.2.$
The excitation signal is generated from a uniform distribution. 
In particular, $v_k \sim \mathrm{Unif}([-\,\bar v,\bar v]).$
Finally, the control signal is given by \eqref{eq:dmac_control}.

\textbf{Stabilization.}
Figure \ref{fig:DMAC_closedloop} shows the closed-loop response of the system with the control signal generated by DMAC with the initial condition $\xi_0 = [1 \ -0.5]^\rmT$ and excitation signal $\bar v = 0.01$.
In particular, 
\ref{fig:DMAC_closedloop}a) shows the state $\xi_k$, 
\ref{fig:DMAC_closedloop}b) shows the input $u_k$ generated by DMAC, 
\ref{fig:DMAC_closedloop}c) shows the state norm of the state $\xi_k$, and 
\ref{fig:DMAC_closedloop}d) shows the parameter estimate $\Theta_k$ computed by DMAC.

Note that the states do not converge to zero due to the presence of the exogenous excitation signal $v_k.$
Specifically, once the system is stabilized, the excitation signal acts as an external input to an asymptotically stable system. 
Since the closed-loop system is asymptotically stable, the mapping from $v_k$ to $\xi_k$ is mean-square stable \cite{ZhouDoyleGlover1996,khalil2002nonlinear}. 
Consequently, if $v_k$ is a zero-mean stochastic process with finite variance, then $\xi_k$ is also a zero-mean stochastic process with bounded variance satisfying
\begin{align}
    \sup_k \BBE \left[ \|\xi_k\|^2 \right]
    \le
    \| G_{\xi v} \|_{\SH_\infty}^2 \,
    \sup_k \BBE\left[ \|v_k\|^2 \right].
    \label{eq:SS_inequality}
\end{align}

\begin{figure}[H]
    \centering
    \includegraphics[width=\columnwidth]{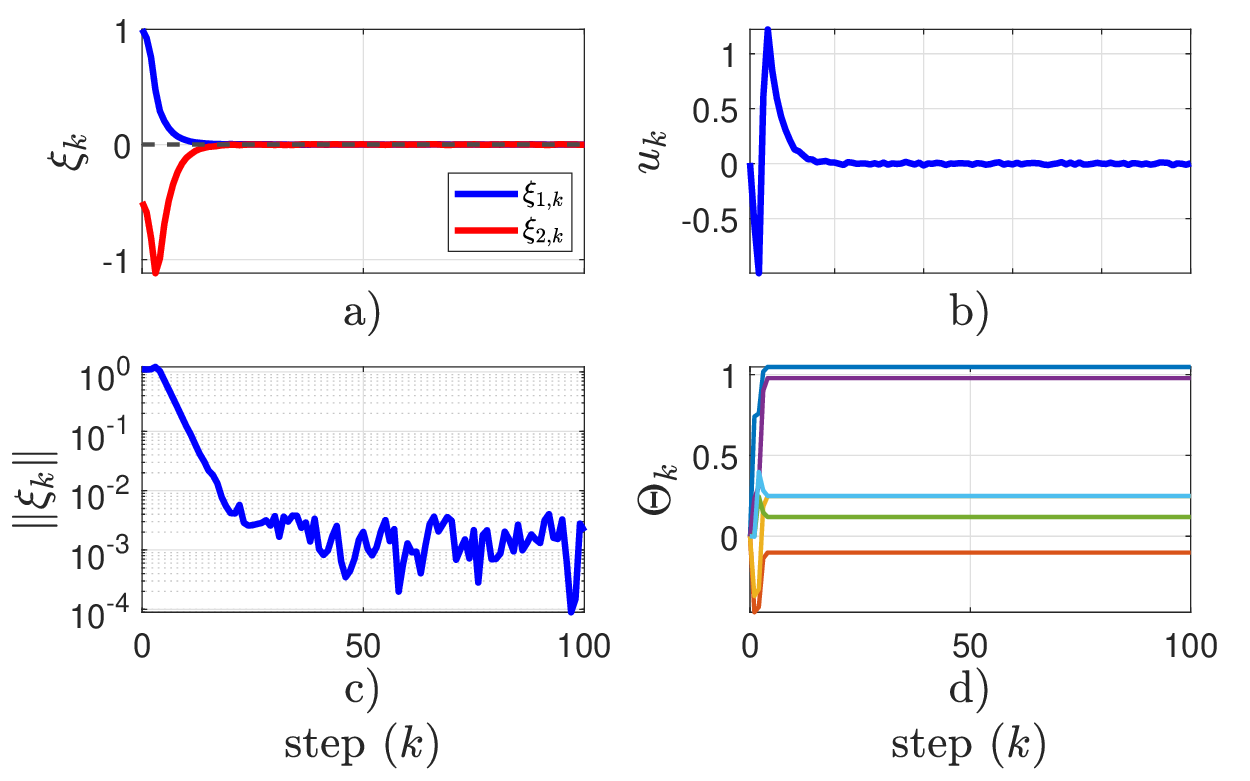}
    \caption{Closed-loop response of the unstable linear system \eqref{eq:TrueDynamics} with the dynamics and input matrices given by \eqref{eq:unstableLinear}.  } 
    \label{fig:DMAC_closedloop}
\end{figure}

Next, Figure \ref{fig:Phase_Portrait_Loop} shows closed-loop trajectories from multiple initial conditions.
In all cases, the DMAC controller stabilizes the open-loop unstable system, and the state converges to a bounded neighborhood of the origin.
\begin{figure}[h]
    \centering
    \includegraphics[width=\columnwidth]{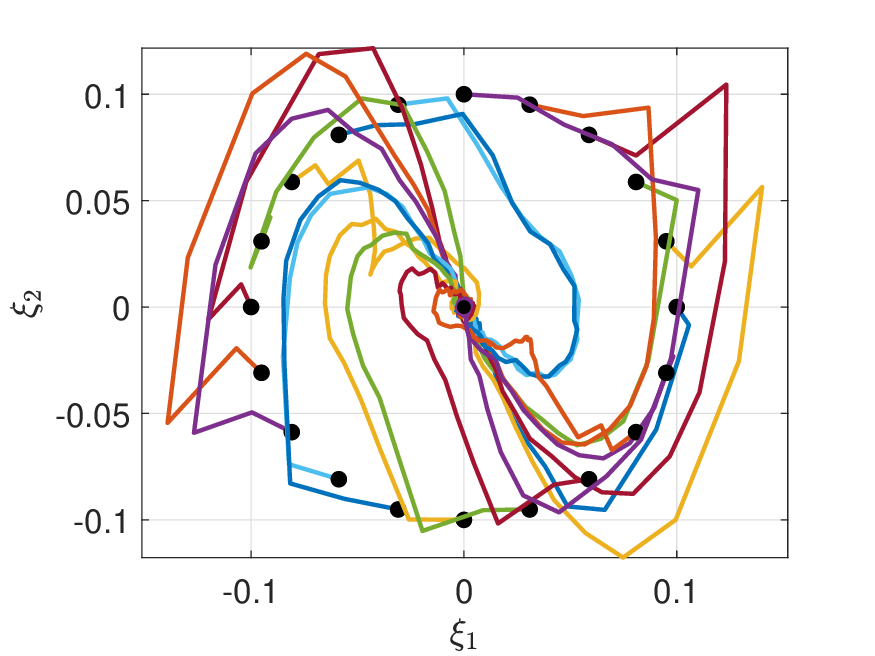}
    \caption{Closed-loop phase portrait with DMAC for various initial conditions. 
    The DMAC controller stabilizes the unstable system, and the state converges to a neighborhood of the origin.} 
    \label{fig:Phase_Portrait_Loop}
\end{figure}

Figure \ref{fig:Terms_Bound} compares the terms $T_{1,k}$-$T_{4,k}$ in the Lyapunov difference expansion \eqref{eq:state_diff_full} with their corresponding analytical upper bounds derived in the proof of Theorem \ref{thm:closed_loop_stability_new}. 
The results verify the validity of the derived inequalities and the collected bound in \eqref{eq:state_bound_collected_raw}.
\begin{figure}[h]
    \centering
    \includegraphics[width=\columnwidth]{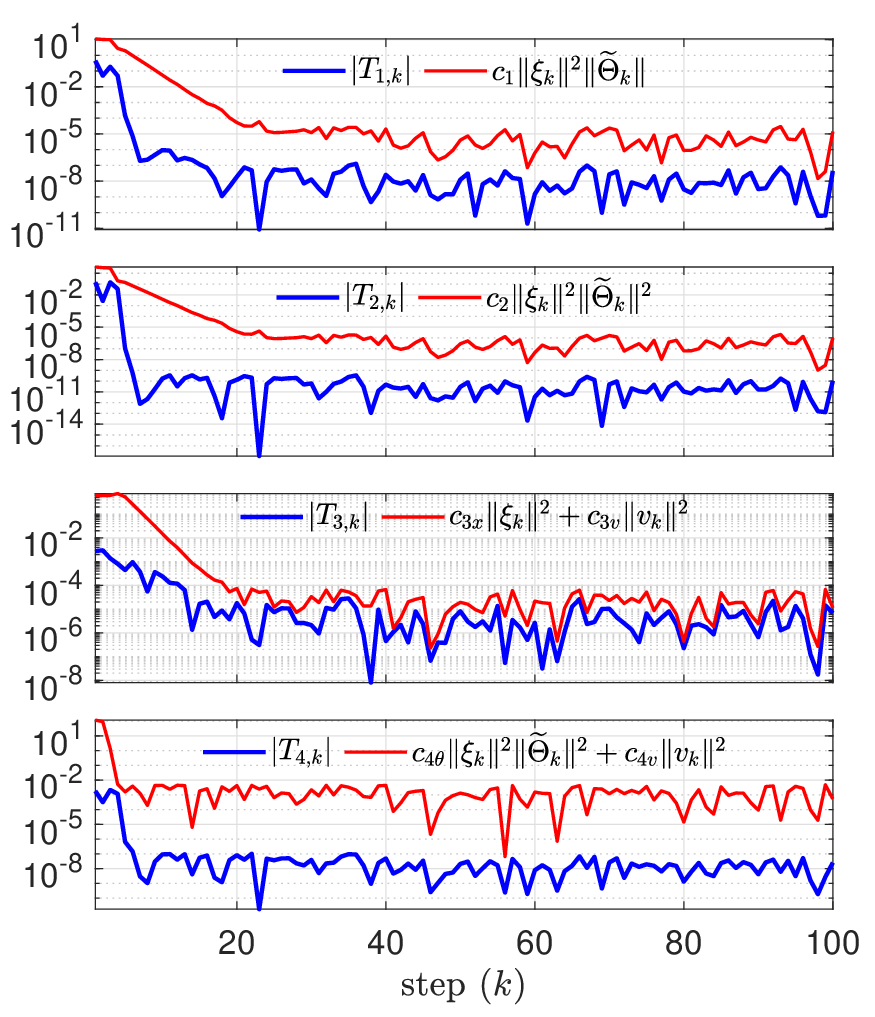}
    \caption{Comparison of the terms $T_{1,k}$--$T_{4,k}$ in the Lyapunov difference expansion \eqref{eq:state_diff_full} with their corresponding analytical upper bounds derived in the proof of Theorem \ref{thm:closed_loop_stability_new}. The results confirm the validity of the derived inequalities and the collected bound in \eqref{eq:state_bound_collected_raw}.}
    \label{fig:Terms_Bound}
\end{figure}

\textbf{State, Estimation Error, and Gain Error Convergence.}
Figure \ref{fig:State_norm_Theta_error_K_error_nomrs} shows the state norm $\|\xi_k\|$, the parameter estimation error $\|\widetilde{\Theta}_k\|_F$, and the gain error $\|\widetilde{K}_k\|_2$ on a logarithmic scale for $\bar v = 0.01$ and $\bar v = 0.05$. 
A smaller excitation magnitude leads to slower convergence of the parameter and gain estimates, which is consistent with the well-known property that the convergence rate of RLS depends on the magnitude of the regressor. 
At the same time, a smaller excitation signal results in convergence of the state to a smaller neighborhood of the origin.
This behavior follows from the asymptotic stability of the closed-loop system, that is, in steady state, the state magnitude scales with the input magnitude according to \eqref{eq:SS_inequality}. 
Consequently, a larger excitation signal induces a larger steady-state response.

\begin{figure}[h] 
    \centering \includegraphics[width=\columnwidth]{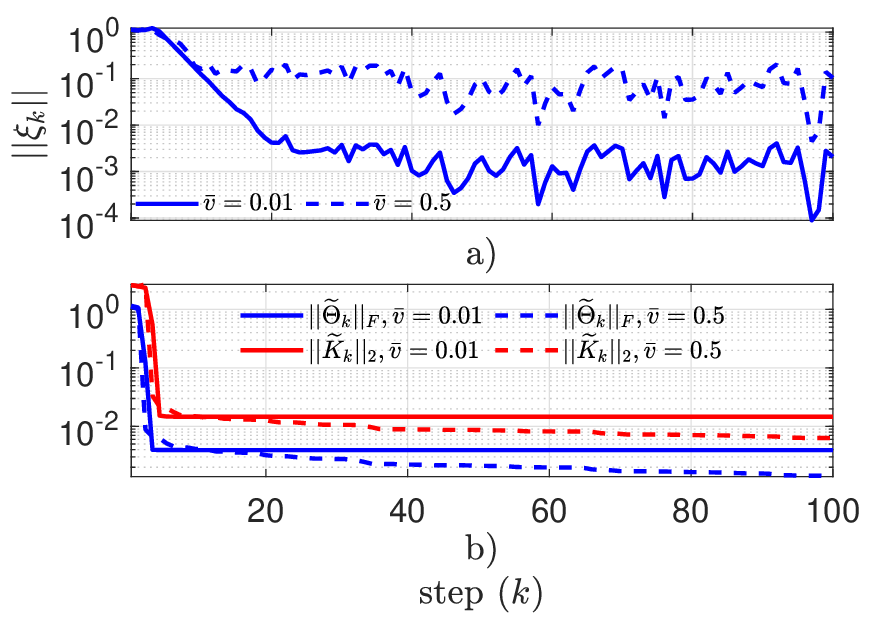} 
    \caption{State norm $\|\xi_k\|$ and parameter estimation and gain errors  $\|\widetilde{\Theta}_k\|_F$ and $\|\widetilde{K}_k\|_2$ on a logarithmic scale for $\bar v = 0.01$ and $\bar v = 0.5$.}
    \label{fig:State_norm_Theta_error_K_error_nomrs} 
    \end{figure}

These results verify the theoretical properties established in Section \ref{sec:DMAC} and demonstrate DMAC's ability to stabilize unstable systems while simultaneously identifying the system dynamics online.

\subsection{Output Tracking in Van Der Pol Oscillator}
This example illustrates the performance of the DMAC controller on a nonlinear system. 
In particular, the example demonstrates DMAC's ability to track reference commands for nonlinear dynamics without requiring a system model.

Consider the Van Der Pol oscillator
\begin{align}
    \label{eq:VDP_ODE} 
     \ddot q -\mu (1 - q^2) \dot q + q = u.
\end{align}
MATLAB's \href{https://www.mathworks.com/help/matlab/ref/ode45.html}{ode45} routine is used to simulate \eqref{eq:VDP_ODE}.
In this example, we set $\mu = 1$ and the initial condition $x(0)$ is randomly generated using MATLAB's \href{https://www.mathworks.com/help/matlab/ref/randn.html}{randn} routine.

The control objective is to ensure that the output
\begin{align}
y_k = q(kT_{\rm s})
\end{align}
tracks a reference signal $r$.
In this example, the DMAC algorithm updates the control signal $u_k$ every $T_\rms = 0.1 $ $\rms.$

\textbf{DMAC Setup.}
To apply the DMAC algorithm, we assume that the measured state is
\begin{align}
\xi_k =
\begin{bmatrix}
q(kT_{\rm s}) \\
\dot q(kT_{\rm s})
\end{bmatrix}.
\end{align}

Since $\xi_k \in \BBR^2$ and $u_k \in \BBR$, the parameter matrix $\Theta_k$ has dimension $2\times3$.
In DMAC, we set 
$\Theta_0 = 0_{2\times 3}, 
    \SP_0 = 10^{-2} I_3, $ and $
    \lambda = 0.995,$
to estimate $\Theta_k=[A_k\ \ B_k] $ using the recursive DMA.
The control gain $K_k$ is computed using the full-state feedback with integral action described in Section \ref{sec:FSFi}, with LQR weights
$Q_{\mathrm{lqr}} = I_3,$ and $ 
R_{\mathrm{lqr}} = 1.$

\textbf{Closed-Loop Response.}
Figure \ref{fig:VDP_DMAC_FSFI} shows the closed-loop response of the Van der Pol system with the DMAC controller. 
In particular, (a) shows the output $y_k$ and the reference signal $r$, 
(b) shows the control signal $u_k$ computed by DMAC,
(c) shows the absolute value of the tracking error $z_k \isdef y_k - r$ on a logarithmic scale, and 
(d) shows the parameter matrix $\Theta_k$ estimated by DMAC.
Note that the output error converges to a neighborhood of zero whose size is proportional to the magnitude of the excitation signal $\bar v$ and remains bounded.
\begin{figure}[h]
    \centering
    \includegraphics[width=\columnwidth]{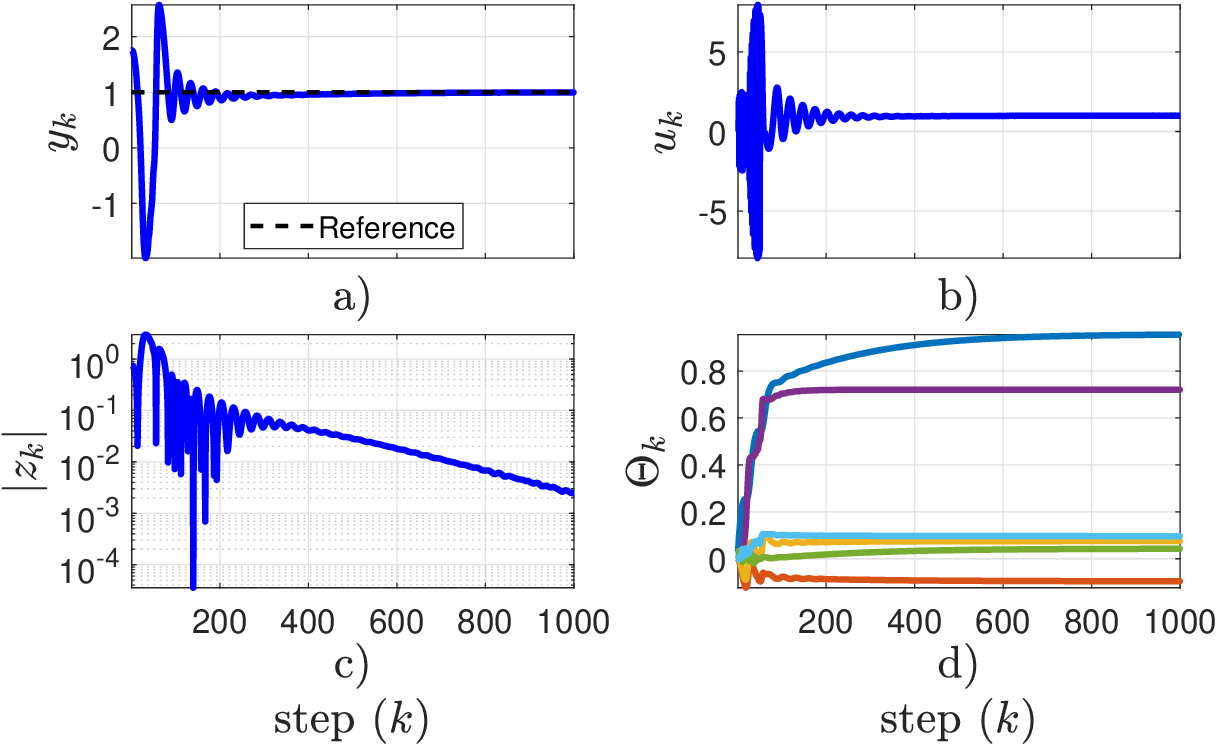}
    \caption{Closed-loop response of \eqref{eq:VDP_ODE} with DMAC. a) shows the output $y_k$ and the reference signal $r,$ b) shows the control signal $u_k,$ c) shows the absolute value of the tracking error $z_k$ on a logarithmic scale, and d) shows the estimate matrix $\Theta_k$ computed by DMAC.} 
    \label{fig:VDP_DMAC_FSFI}
\end{figure}

\textbf{Sensitivity to Algorithm Hyperparameters.}
To investigate the robustness of the DMAC algorithm with respect to its tuning parameters, we vary the hyperparameters $\lambda$, $R_\Theta$, $Q_{\mathrm{lqr}}$, and $R_{\mathrm{lqr}}$, while keeping the remaining parameters at their nominal values.

Figure \ref{fig:VDP_DMAC_Sensitivity_Hyperparameters} shows the effect of these hyperparameters on the closed-loop response $y_k$. 
Specifically, (a), (b), (c), and (d) show the effect of varying $\lambda$, $R_\Theta$, $Q_{\mathrm{lqr}}$, and $R_{\mathrm{lqr}}$, respectively.
In each case, the hyperparameters are varied over several orders of magnitude. 
The results indicate that the DMAC algorithm maintains stable tracking performance across a wide range of parameter values, suggesting robustness to tuning.

\begin{figure}[h]
    \centering
    \includegraphics[width=\columnwidth]{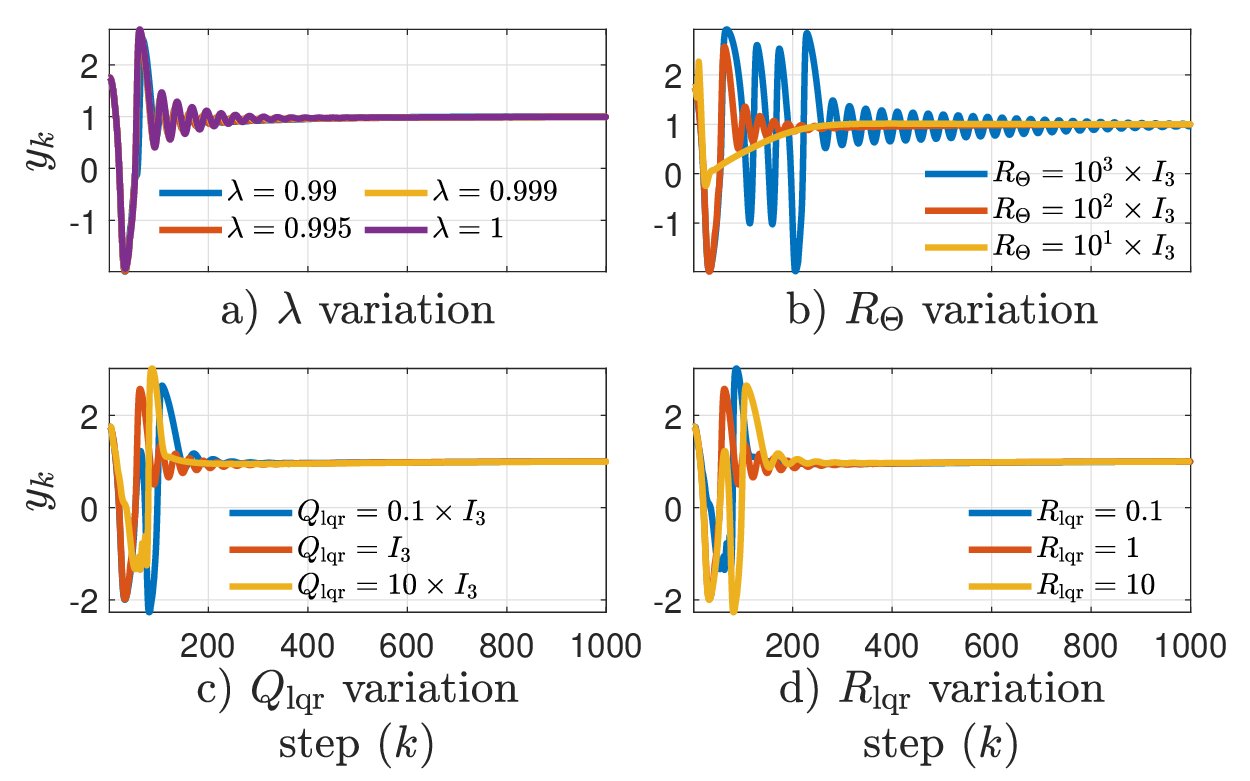}
    \caption{Effect of DMAC hyperparameters on the closed-loop response of the Van der Pol oscillator. 
    a) Effect of the forgetting factor $\lambda$, 
    b) effect of the regularization matrix $R_\Theta$, 
    c) effect of the control weighting matrix $Q_{\mathrm{lqr}}$, and 
    d) effect of the control weighting scalar $R_{\mathrm{lqr}}$. 
    The results indicate that the DMAC controller maintains stable tracking performance over a wide range of hyperparameter values.}
    \label{fig:VDP_DMAC_Sensitivity_Hyperparameters}
\end{figure}

\textbf{Robustness to System Parameters.}
Finally, we assess the robustness of the DMAC controller to variations in the physical system parameters. 
In particular, we vary the nonlinear damping parameter $\mu$ while keeping all other parameters fixed.

Figure \ref{fig:VDP_DMAC_System_Parameters} shows the resulting closed-loop response $y_k$ for different values of $\mu$. 
The result indicates that the DMAC controller maintains stable tracking performance, demonstrating robustness with respect to variations in the system dynamics.

\begin{figure}[h]
    \centering
    \includegraphics[width=\columnwidth]{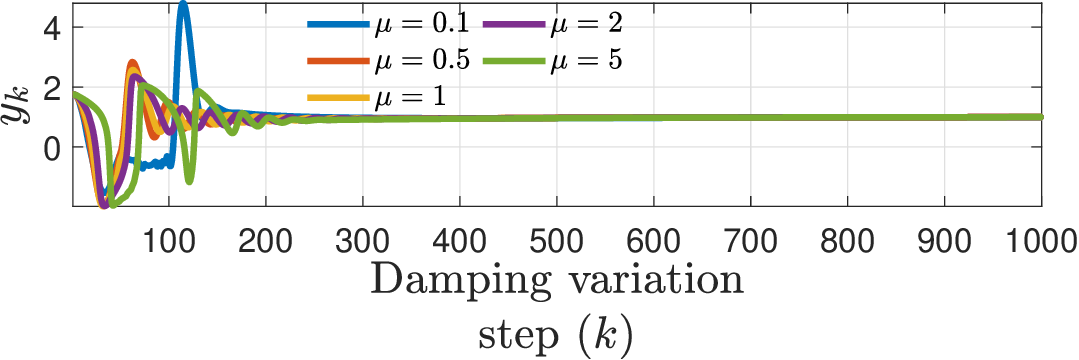}
    \caption{Effect of varying the Van der Pol parameter $\mu$ on the closed-loop response under DMAC control. 
    Despite variations in the system dynamics, the DMAC controller maintains stable tracking performance, indicating robustness to plant parameter uncertainty.}
    \label{fig:VDP_DMAC_System_Parameters}
\end{figure}

\subsection{Output Tracking in Burgers' Equation}
This example demonstrates DMAC's ability to track reference commands for systems governed by nonlinear partial differential equations without requiring a system model.
Consider the one-dimensional Burgers' equation given by
\begin{align}
\label{eq:Burgers_PDE}
\frac{\partial w}{\partial t} + w \frac{\partial w}{\partial x} = \nu \frac{\partial^2 w}{\partial x^2} + u_\rmc(x,t),
\end{align}
where $w(x,t)$ may represent a pressure or a velocity field, $\nu$ is the viscosity coefficient, and $u_\rmc(x,t)$ is the control input applied at the location $x$ at time $t$.

To simulate the Burgers equation \eqref{eq:Burgers_PDE},
we discretize \eqref{eq:Burgers_PDE} in space using a uniform grid with $N$ nodes over the domain $x \in [0, 2\pi].$
Thus, the spatial step size $\Delta x = 2\pi / (N-1)$ and 
the $i$th spatial node is $x_i = (i-1) \Delta x.$
The approximation of the field variable $w(x,t)$ at the spatial node $x_i$ is denoted by $w_i(t).$
At each node, the convective and diffusive terms are then approximated using central differences, that is, 
\begin{align}
    \dpder{w}{x} \Bigg |_{x_i}
        &\approx
            \frac{w_{i+1} -w_{i-1}}{2 \Delta x} 
    \\
    \frac{\partial^2 w}{\partial x^2} \Bigg|_{x_i}
        &\approx
            \frac{w_{i+1} - 2 w_i + w_{i-1}}{\Delta x^2}.
\end{align}
%
Thus, for $i = 1,2,3, \ldots, N,$ 
\begin{align}
\dot{w}_i 
    &=
        - w_i \frac{w_{i+1} - w_{i-1}}{2 \Delta x} 
        \neweqline
        + \nu \frac{w_{i+1} - 2 w_i + w_{i-1}}{\Delta x^2} + u_{\rmc,i}(t). 
    \label{eq:discretizedODE}
\end{align}
To ensure a continuous flow, we impose periodic boundary conditions, where the function values wrap around the domain.
In particular, the periodic boundary conditions are imposed by setting  $w_{0} = w_N$ and $w_{N+1} = w_1.$

In this work, we set $N=100,$ $\nu = 0.1,$ and initialize $w(x,0)$ randomly using MATLAB's \href{https://www.mathworks.com/help/matlab/ref/randn.html}{randn} routine. 
MATLAB's \href{https://www.mathworks.com/help/matlab/ref/ode45.html}{ode45} routine is used to simulate \eqref{eq:discretizedODE}.
The output of the system is assumed to be the field variable at the $61$st node, that is, $y = w_{61}.$
The objective of the DMAC controller is to ensure that the output tracks a unit step reference signal $r.$
In this example, the DMAC algorithm updates the control signal $u_k$ every $T_\rms = 0.01 $ $\rms.$

To reflect the physical scenario with limited sensors, we assume that the field variable $w_i$ is measured at only a few sparse locations. 
In particular, we assume that $w_i$ is measured at the nodes $i \in \{1, 16, 31, 46, 61, 76, 91 \}.$

\textbf{DMAC Setup.} 
To apply the DMAC algorithm, we assume that the measured state is $\xi_k = \matl w_1 & w_{16} & w_{31} & w_{46} & w_{61} & w_{76} & w_{91} \matr^\rmT \Big |_{t=k T_\rms}.$
Finally, in this example, the control is applied at $55$th node, therefore, $u = u_{\rmc,55}.$



Since $\xi_k \in \BBR^7$ and $u_k \in \BBR$, the parameter matrix $\Theta_k$ has dimension $7\times8$.
In DMAC, we set 
$\Theta_0 = 0_{7\times 8}, 
    \SP_0 = 10^{-2} I_8, $ and $
    \lambda = 0.9995,$
to estimate $\Theta_k=[A_k\ \ B_k] $ using the recursive DMA.
The control gain $K_k$ is computed using the full-state feedback with integral action described in Section \ref{sec:FSFi}, with LQR weights
$Q_{\mathrm{lqr}} = 10 \times I_8,$ and $ 
R_{\mathrm{lqr}} = 0.1.$


\textbf{Closed-Loop Response.} Figure \ref{fig:Burgers_DMAC_FSFI} shows the closed-loop response of the Burgers' system \eqref{eq:Burgers_PDE} with the DMAC controller. In particular, (a) shows the output $y_k$ and the reference signal $r$, 
(b) shows the control signal $u_k$ computed by DMAC,
(c) shows the absolute value of the tracking error $z_k \isdef y_k - r$ on a logarithmic scale, and 
(d) shows the parameter matrix $\Theta_k$ estimated by DMAC. Note that the output error converges to a neighborhood of zero whose size is proportional to the magnitude of the excitation signal $\bar v$ and remains bounded. Figure \ref{fig:Burgers_DMAC_states_contour} shows the closed-loop response of the discretized state of the Burgers equation.  
 
\begin{figure}[h]
\centering
\includegraphics[width=\columnwidth]{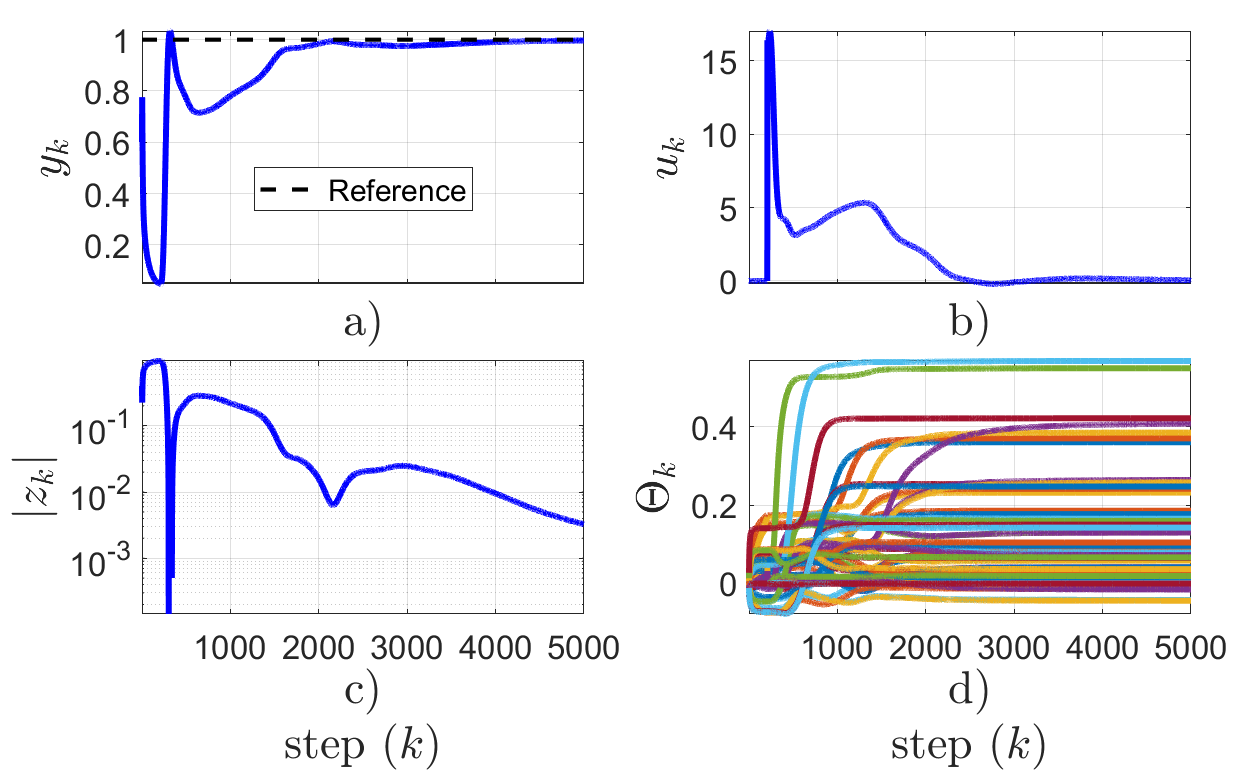}
\caption{Closed-loop response of \eqref{eq:Burgers_PDE} with DMAC. a) shows the output $y_k$ and the reference signal $r,$ b) shows the control signal $u_k,$ c) shows the absolute value of the tracking error $z_k$ on a logarithmic scale, and d) shows the estimate matrix $\Theta_k$ computed by DMAC.}
\label{fig:Burgers_DMAC_FSFI}
\end{figure}

\begin{figure}[H]
\centering
\includegraphics[width=\columnwidth]{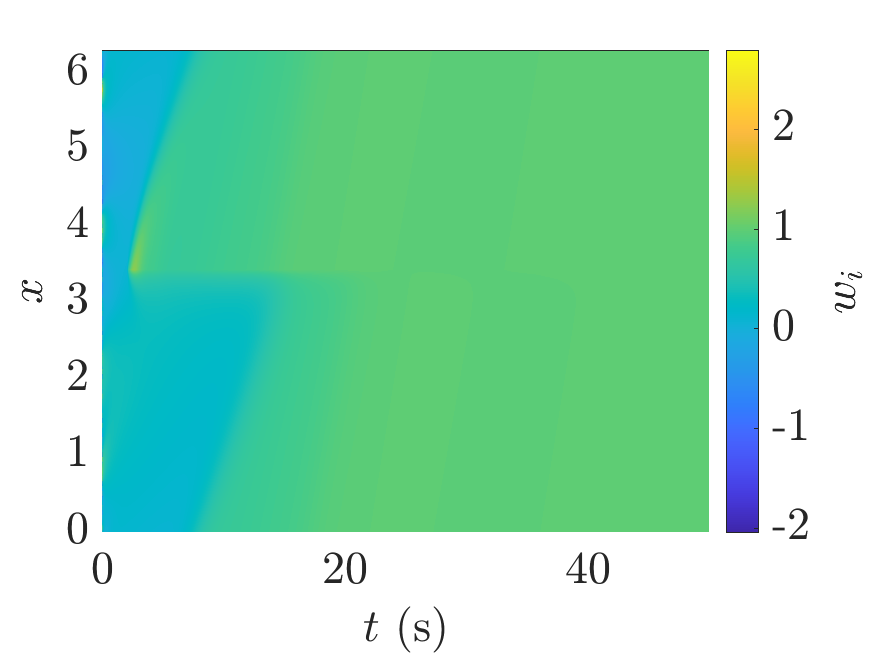}
\caption{Closed-loop response of the discretized Burgers equation.}
\label{fig:Burgers_DMAC_states_contour}
\end{figure}

\textbf{Sensitivity to Algorithm Hyperparameters.}
To investigate the robustness of the DMAC algorithm with respect to its tuning parameters, we vary the hyperparameters $\lambda$, $R_\Theta$, $Q_{\mathrm{lqr}}$, and $R_{\mathrm{lqr}}$, while keeping the remaining parameters at their nominal values.

Figure \ref{fig:Burgers_DMAC_Sensitivity_Hyperparameters} shows the effect of these hyperparameters on the closed-loop response $y_k$. 
Specifically, (a), (b), (c), and (d) show the effect of varying $\lambda$, $R_\Theta$, $Q_{\mathrm{lqr}}$, and $R_{\mathrm{lqr}}$, respectively.
In each case, the hyperparameters are varied over several orders of magnitude. 
The results indicate that the DMAC algorithm maintains stable tracking performance across a wide range of parameter values, suggesting robustness to tuning.
\begin{figure}[h]
\centering
\includegraphics[width=\columnwidth]{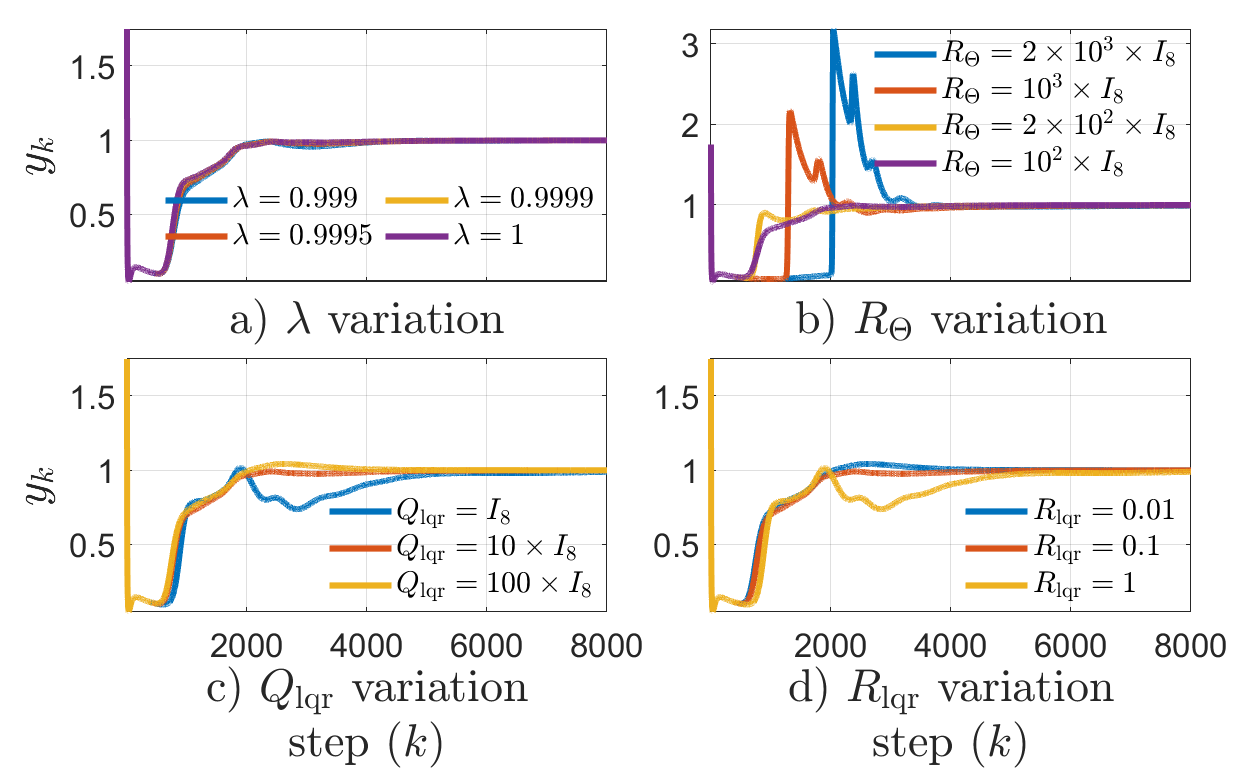}
\caption{Effect of DMAC hyperparameters on the closed-loop performance.}
\label{fig:Burgers_DMAC_Sensitivity_Hyperparameters}
\end{figure}

\textbf{Robustness to System Parameters.}
Finally, we assess the robustness of the DMAC controller to variations in the physical system parameters. 
In particular, we vary the viscosity coefficient $\nu$ while keeping all other parameters fixed.

Figure \ref{fig:Burgers_DMAC_System_Parameters} shows the resulting closed-loop response $y_k$ for different values of $\nu$. 
The result indicates that the DMAC controller maintains stable tracking performance, demonstrating robustness with respect to variations in the system dynamics.
\begin{figure}[h]
\centering
\includegraphics[width=\columnwidth]{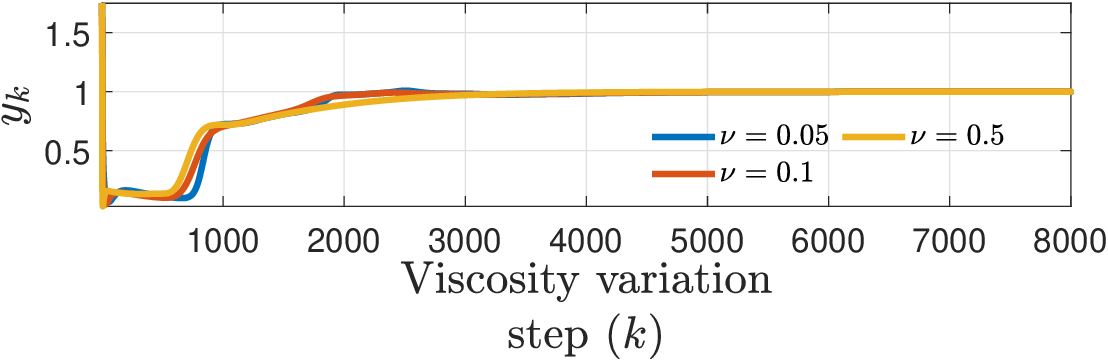}
\caption{Effect of varying the system's physical parameters on the closed-loop performance.}
\label{fig:Burgers_DMAC_System_Parameters}
\end{figure}

\section{Conclusions}
\label{sec:conclusions}

This paper presented a data-driven adaptive control framework called dynamic mode adaptive control (DMAC) for synthesizing controllers for dynamical systems whose mathematical models are unavailable or unsuitable for classical control design. 
The proposed approach integrates dynamic mode approximation with adaptive control synthesis within a unified recursive architecture, enabling controllers to be constructed directly from streaming measurement data.

The identification component of DMAC uses a matrix recursive least-squares algorithm with a forgetting factor to estimate a local linear state-space representation of the system dynamics online. 
Because the identified model evolves with respect to measured system variables, the resulting state representation retains a clear physical interpretation and can be used directly for controller synthesis. 
Theoretical analysis established convergence properties of the recursive identification algorithm and boundedness of the closed-loop system under the DMAC controller.

The effectiveness of the proposed framework was demonstrated through numerical studies on dynamical systems of increasing complexity. 
An unstable linear system was used to verify the theoretical properties of the algorithm and illustrate the convergence behavior of the recursive identification step. 
Additional examples involving the Van der Pol oscillator and the Burgers equation demonstrated DMAC's ability to regulate nonlinear and distributed-parameter systems using limited measurement data. 
Sensitivity studies further indicated that the proposed approach is robust to variations in algorithm hyperparameters and system parameters.

Future work will focus on several directions. 
First, methods will be investigated to reduce or eliminate the need for persistent excitation while maintaining reliable parameter convergence in the recursive dynamics approximation. 
In particular, adaptive excitation mechanisms and identification strategies will be explored to ensure sufficient information for identification while minimizing the impact of excitation signals on closed-loop performance. 
Second, constrained identification techniques will be developed to enforce structural properties in the learned dynamics, such as stabilizability or controllability, thereby improving the robustness and reliability of the resulting controller. 
Finally, future work will investigate real-time implementation of the DMAC framework and evaluate its computational performance and control effectiveness on physical dynamical systems.





\appendix

\section*{Appendix}

This appendix presents the technical derivations and supporting results omitted from the main text for clarity. 
A summary of all appendix results is provided in Table \ref{tab:appendix_summary} to facilitate readability and highlight their role in the analysis.

\begin{table*}[t]
\centering
\caption{Summary of formal results in the appendices.}
\label{tab:appendix_summary}
\begin{tabular}{p{0.2\textwidth} p{0.05\textwidth} p{0.08\textwidth} p{0.55\textwidth}}
\toprule
\textbf{Appendix Section} & \textbf{Label} & \textbf{Type} & \textbf{Description} \\
\midrule

\textit{Matrix RLS} 
& \ref{appndx:matrix_RLS}
& Algorithm 
& Recursive exponentially weighted least-squares update used to estimate the local linear model $\Theta_k = \matl A_k & B_k \matr$, including covariance update $\SP_k$. \\

\midrule

\textit{Full-State Feedback Control with Integral Action} & \ref{sec:FSFi} & Section &
Augmented-state formulation and state-feedback with integral action used in DMAC. \\

\midrule

\textit{Proof of Proposition \ref{prop:P_pos_def}} & \ref{proof:P_pos_def} & Proof &
Proof that the covariance matrix update admits the inverse recursion and that $\SP_k$ remains positive definite. \\


\textit{Proof of Proposition \ref{prop:Theta_tilde_update}} & \ref{proof:Theta_tilde_update} & Proof &
Proof of the matrix estimation error recursion for $\widetilde{\Theta}_k$. \\


\textit{Proof of Theorem \ref{thm:matrix_rls_stability}} & \ref{proof:matrix_rls_stability} & Proof &
Lyapunov proof for stability of the matrix RLS estimation error dynamics. \\


\textit{Proof of Theorem \ref{thm:closed_loop_stability_new}} & \ref{proof:closed_loop_stability_new} & Proof &
Composite Lyapunov proof establishing boundedness of the closed-loop state under DMAC. \\


\textit{Young's Inequality with Parameter} & \ref{lem:young} & Lemma &
Auxiliary inequality used to bound cross terms in the closed-loop Lyapunov analysis. \\

\midrule



\textit{Persistence of excitation under random inputs} & \ref{prop:exciting_input} & Proposition &
Shows that, for a Schur system with controllable $(A,B)$ and i.i.d.\ zero-mean inputs, the regressor is persistently exciting in expectation. \\

\textit{Continuity and local Lipschitz property of the LQR gain} & \ref{prop:lqr_gain_convergence_bound} & Proposition &
Establishes continuity of the infinite-horizon LQR gain and a local Lipschitz bound with respect to perturbations in the pair $(A,B)$. \\

\bottomrule
\end{tabular}
\end{table*}

\section{Recursive Solution of the Exponentially Weighted Regularized Least-Squares Problem}

\label{appndx:matrix_RLS}

This appendix derives the recursive solution of the exponentially weighted regularized least-squares problem.
The derivation shows that the parameter matrix $\Theta_k$ that minimizes the cost function \eqref{eq:cost_def} admits both a closed-form expression and a recursive update that can be implemented online.

\subsection{Problem Formulation}

Consider the cost function
\begin{align}
    J(k,\Theta)
        &=
            \sum_{i=1}^k
            \lambda^{k-i}
            (y_i - \Theta x_i )^\rmT(y_i - \Theta x_i )
            + \tr \lambda^{k} R_\Theta \Theta \Theta^\rmT,
    \label{eq:cost_def}
\end{align}
where for all $i\geq 1,$ $y_i \in \BBR^{l_y}$ and
$x_i \in \BBR^{l_x},$ 
$\Theta \in \BBR^{l_y \times l_x}$, 
$R_\Theta \in \BBR^{l_y \times l_y}$ is positive definite, and 
$\lambda \in (0,1]$ is the forgetting factor.

The first term in \eqref{eq:cost_def} penalizes the exponentially weighted prediction error, where the forgetting factor $\lambda$ determines how quickly past data are forgotten.
The second term provides Tikhonov regularization and ensures that a unique minimizer exists. 

Note that the cost function can be written as
\begin{align}
    J(k,\Theta)
        &=
            \sum_{i=1}^k \lambda^{k-i} y_i^\rmT y_i 
            + \tr \SX_k \Theta ^\rmT \Theta   
            - 2 \tr \SY_k \Theta,
\end{align}
where
\begin{align}
    \SX_k 
        &\isdef 
            \sum_{i=1}^k  \lambda^{k-i} x_i x_i^\rmT + \lambda^{k} R_\Theta   
    , \\
    \SY_k 
        &\isdef 
            \sum_{i=1}^k  \lambda^{k-i} x_i y_i^\rmT .
\end{align}

The following identity from matrix differential calculus is used to compute the gradient of the cost function with respect to $\Theta$.
\begin{fact}
    \label{fact:trace_derivatives}
    Let $\Theta\in \BBR^{l_y \times l_x}$, $A \in \BBR^{l_y \times l_y},$ and $B\in \BBR^{l_x \times l_x}$ be matrices. Then,
    \begin{align}
        \frac{\partial}{\partial \Theta} \tr B \Theta^\rmT  \Theta
            &=
                \Theta (B+B^\rmT ) , 
        \\
        \frac{\partial}{\partial \Theta} \tr A \Theta B
            &=
                A^\rmT B. 
    \end{align}
\end{fact}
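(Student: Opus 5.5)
The plan is to identify each gradient through the first-order (directional) expansion of the scalar function under the Frobenius inner product $\langle G,H\rangle \isdef \tr(G^\rmT H)$ on $\BBR^{l_y\times l_x}$. Under this convention, $\frac{\partial f}{\partial \Theta}$ is the unique matrix $G$ such that
\begin{align}
f(\Theta+H) = f(\Theta) + \tr(G^\rmT H) + o(\|H\|_\rmF)
\end{align}
for every perturbation $H \in \BBR^{l_y\times l_x}$. Both maps are polynomial in the entries of $\Theta$, so they are differentiable. It therefore suffices to expand $f(\Theta+H)$, collect the terms that are linear in $H$, and rewrite them in the form $\tr(G^\rmT H)$ using the cyclic property and transpose invariance of the trace.

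For the first identity, I will expand $\tr\big(B(\Theta+H)^\rmT(\Theta+H)\big)$. The terms linear in $H$ are $\tr(BH^\rmT\Theta)$ and $\tr(B\Theta^\rmT H)$; the remaining term $\tr(BH^\rmT H)$ is $O(\|H\|_\rmF^2)$. By the cyclic property, $\tr(BH^\rmT\Theta) = \tr(H^\rmT \Theta B)$. By transposition followed by the cyclic property, $\tr(B\Theta^\rmT H) = \tr(H^\rmT \Theta B^\rmT)$. Adding the two gives $\tr\big((\Theta(B+B^\rmT))^\rmT H\big)$, so the gradient is $\Theta(B+B^\rmT)$.

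For the second identity, the map $\Theta\mapsto \tr(A\Theta B)$ is linear. The increment is therefore exactly $\tr(AHB)$, with no higher-order remainder. Cycling gives $\tr(AHB) = \tr(HBA)$, and transposing gives $\tr\big((A^\rmT B^\rmT)^\rmT H\big)$. This produces the general formula $A^\rmT B^\rmT$.

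The only real obstacle is bookkeeping of dimensions and conventions in the second identity. For $A\Theta B$ to be square, the factors must be conformable, and the stated form $A^\rmT B$ coincides with $A^\rmT B^\rmT$ exactly when $B$ is symmetric. In the application within this appendix, the relevant matrices arise from $\SY_k$ and $\SX_k$, and there the identity can be checked directly or applied with $B$ symmetric. I would therefore state the general result $A^\rmT B^\rmT$, note that it reduces to the displayed form in the symmetric case used in the appendix, and verify that the derivative of $\tr(\SY_k\Theta)$ then yields $\SY_k^\rmT$ in the normal equation.
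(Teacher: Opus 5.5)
Your proposal is correct and goes further than the paper. The paper gives no derivation, only a pointer to the Matrix Cookbook \cite{petersen2008matrix}. You instead derive both gradients from the first-order expansion $f(\Theta+H)=f(\Theta)+\tr(G^\rmT H)+o(\|H\|_\rmF)$, using only cyclicity and transpose invariance of the trace. This is self-contained, and it fixes the entrywise (Frobenius) layout convention that the paper uses implicitly when it writes $2\Theta\SX_k-2\SY_k^\rmT$.

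Your derivation also exposes the misprint you flag: the second identity as stated is false in general. Take $l_x=l_y=2$, $A=I$ and $B=e_1e_2^\rmT$. Then $\tr(A\Theta B)=e_2^\rmT\Theta e_1=\Theta_{21}$, whose gradient is $e_2e_1^\rmT=B^\rmT\neq A^\rmT B$. The cited reference itself gives $A^\rmT B^\rmT$, as you do. The stated dimensions are also inconsistent: $A\Theta B\in\BBR^{l_y\times l_x}$ has no trace unless $l_x=l_y$. The correct hypothesis is $A\in\BBR^{p\times l_y}$ and $B\in\BBR^{l_x\times p}$.

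One sharpening of your last paragraph. The only use of the second identity is $\tr(\SY_k\Theta)$, i.e., $A=\SY_k\in\BBR^{l_x\times l_y}$ and $B=I_{l_x}$. So it is $B=I$, not $\SX_k$, that makes $A^\rmT B^\rmT=A^\rmT B=\SY_k^\rmT$. The matrix $\SX_k$ enters only the first identity, where its symmetry turns $\Theta(\SX_k+\SX_k^\rmT)$ into $2\Theta\SX_k$. Hence the normal equation $\Theta\SX_k=\SY_k^\rmT$ in Proposition~\ref{prop:theta_k_def}, and everything built on it, is unaffected by the misprint.
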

\begin{proof}
    See \cite{petersen2008matrix}.
\end{proof}

\subsection{Closed-Form Solution}
The following proposition provides a closed-form solution of the minimizer. 
\begin{proposition}
[\textbf{Closed-form minimizer of the weighted least-squares problem.}]
    \label{prop:theta_k_def}
    Consider the cost function \eqref{eq:cost_def}.
    Let $\Theta_k$ denote the minimizer of the cost function \eqref{eq:cost_def}.
    Then, 
    \begin{align}
        \Theta_k
            =
                \SY_k^\rmT \SX_k^{-1}.
    \end{align}
\end{proposition}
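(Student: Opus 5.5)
The plan is to treat $J(k,\Theta)$ as a strictly convex quadratic function of the matrix variable $\Theta$, find its unique stationary point with Fact \ref{fact:trace_derivatives}, and then confirm that this point is the global minimizer. The starting point is the expanded form
\begin{align*}
J(k,\Theta) = \sum_{i=1}^k \lambda^{k-i} y_i^\rmT y_i + \tr \SX_k \Theta^\rmT \Theta - 2 \tr \SY_k \Theta .
\end{align*}
For this to be consistent with the definition of $\SX_k$, the regularization term must be read as $\lambda^k \tr(\Theta R_\Theta \Theta^\rmT) = \lambda^k \tr(R_\Theta \Theta^\rmT \Theta)$ with $R_\Theta \in \BBR^{l_x \times l_x}$ positive definite. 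This is the only reading under which $\lambda^k R_\Theta$ can be absorbed into $\SX_k$, and it matches \eqref{eq:J_k_def}. I will state this interpretation explicitly.

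\textbf{Step 1: $\SX_k$ is invertible.} Each term $\lambda^{k-i} x_i x_i^\rmT$ is positive semidefinite, and $\lambda^k R_\Theta$ is positive definite because $\lambda > 0$ and $R_\Theta > 0$. Hence $\SX_k$ is symmetric positive definite, so $\SX_k^{-1}$ exists.

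\textbf{Step 2: stationarity.} Apply Fact \ref{fact:trace_derivatives} with $B = \SX_k$ to get
\begin{align*}
\frac{\partial}{\partial \Theta} \tr \SX_k \Theta^\rmT \Theta = \Theta(\SX_k + \SX_k^\rmT) = 2\Theta \SX_k .
\end{align*}
Apply it with $A = \SY_k$ and $B = I_{l_x}$ to get $\frac{\partial}{\partial \Theta} \tr \SY_k \Theta = \SY_k^\rmT$. The gradient is therefore $2\Theta \SX_k - 2\SY_k^\rmT$. Setting it to zero and using Step 1 yields the unique stationary point $\Theta = \SY_k^\rmT \SX_k^{-1}$.

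\textbf{Step 3: global minimality.} To avoid relying only on a second-derivative argument, I would complete the square. Set $\Theta_k \isdef \SY_k^\rmT \SX_k^{-1}$. Using $\SY_k^\rmT = \Theta_k \SX_k$ and the symmetry of $\SX_k$, one checks that
\begin{align*}
J(k,\Theta) = J(k,\Theta_k) + \tr\big((\Theta - \Theta_k)\SX_k(\Theta - \Theta_k)^\rmT\big).
\end{align*}
Because $\SX_k > 0$, the trace term is nonnegative and vanishes only when $\Theta = \Theta_k$. So $\Theta_k$ is the unique global minimizer.

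The only step needing care is this completing-the-square identity, in particular the cyclic trace manipulations that match the cross term $-2\tr \SY_k \Theta$ with $-2\tr(\Theta_k \SX_k \Theta^\rmT)$. That, together with fixing the dimension of $R_\Theta$, is where I expect the main bookkeeping obstacle; everything else is routine.
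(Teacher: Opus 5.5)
Your proposal is correct and follows the same route as the paper, which likewise obtains the gradient $2\Theta\SX_k - 2\SY_k^\rmT$ from Fact~\ref{fact:trace_derivatives} and sets it to zero. Your extra steps fill in justifications the paper leaves implicit: positive definiteness of $\SX_k$ for invertibility, the completing-the-square identity showing the stationary point is the unique global minimizer, and reading the regularizer as $\lambda^k \tr(R_\Theta \Theta^\rmT \Theta)$ with $R_\Theta \in \BBR^{l_x \times l_x}$, which the paper's definition of $\SX_k$ requires.
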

\begin{proof}
Using Fact \ref{fact:trace_derivatives}, it follows that 
\begin{align}
    \frac{\partial}{\partial \Theta} 
    \tr \SX_k \Theta ^\rmT \Theta 
        &=
            2 \Theta \SX_k , 
    \quad 
    \frac{\partial}{\partial \Theta} \tr\SY_k \Theta
        =
            \SY_k^\rmT, 
\end{align}
and thus
\begin{align}
     \frac{\partial}{\partial \Theta}
     J(k,\Theta)
        =
            2 \Theta \SX_k - 2\SY_k^\rmT.
\end{align}
Setting the gradient equal to zero yields the minimizer. 
\end{proof}

\subsection{Recursive Update}
The following proposition provides a recursive formula for computing the closed-form solution of the minimizer. 
This result provides the recursive update used in Step 3 of the DMAC algorithm described in Section \ref{sec:DMAC_algorithm}.
\begin{proposition}
    [\textbf{Recursive update for exponentially weighted least squares.}]
    \label{prop:theta_k_recursive}    
    Consider the cost function \eqref{eq:cost_def}.
    Let $\Theta_k$ denote the minimizer of the cost function \eqref{eq:cost_def}.
    Then, the minimizer $\Theta_k$ satisfies
    \begin{align}
        \Theta_k
            &=
                \Theta_{k-1} 
                +
                \left(
                    y_k - \Theta_{k-1} x_k
                \right)
                x_k^\rmT \SP_k  
            , \\
        \SP_k
            &=
                \lambda \inv \SP_{k-1} 
                -
                \lambda \inv \SP_{k-1} x_k 
                \gamma_k\inv
                x_k^\rmT \SP_{k-1},
    \end{align}
    where
    $\gamma_k \isdef \lambda  +  x_k^\rmT \SP_{k-1} x_k,$ and 
    $\SP_0 \isdef R_\Theta\inv. $
\end{proposition}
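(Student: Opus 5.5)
We will derive the recursion directly from the closed-form minimizer of Proposition \ref{prop:theta_k_def}, so that $\Theta_k = \SY_k^\rmT \SX_k\inv$. Throughout we define $\SP_k \isdef \SX_k\inv$. The regularization term is handled so that $\SX_0 = R_\Theta$; the statement's $R_\Theta$ must then be read as an $l_x\times l_x$ matrix for the dimensions to match. The first step is to read off two one-step recursions from the definitions of $\SX_k$ and $\SY_k$:
\begin{align}
    \SX_k = \lambda \SX_{k-1} + x_k x_k^\rmT,
    \qquad
    \SY_k = \lambda \SY_{k-1} + x_k y_k^\rmT.
\end{align}
Both follow by splitting off the $i=k$ term and factoring $\lambda$ out of the remaining sum and the $\lambda^k R_\Theta$ term. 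Since $R_\Theta>0$, each $\SX_k$ is positive definite, so $\SP_k$ is well defined and positive definite for all $k\ge 0$.

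The second step is the covariance update. We apply the Sherman--Morrison formula to $\SP_k = (\lambda \SP_{k-1}\inv + x_k x_k^\rmT)\inv$. This gives
\begin{align}
    \SP_k = \lambda\inv \SP_{k-1} - \lambda\inv \SP_{k-1} x_k (\lambda + x_k^\rmT \SP_{k-1} x_k)\inv x_k^\rmT \SP_{k-1},
\end{align}
which is the stated recursion with $\gamma_k = \lambda + x_k^\rmT\SP_{k-1}x_k$. Note that $\gamma_k \ge \lambda > 0$, so it is invertible.

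The third step is the parameter update. We write
\begin{align}
    \Theta_k = \SY_k^\rmT \SP_k = (\lambda \SY_{k-1}^\rmT + y_k x_k^\rmT)\SP_k.
\end{align}
From the definition of $\Theta_{k-1}$ we have $\SY_{k-1}^\rmT = \Theta_{k-1}\SX_{k-1}$. Then $\lambda \SX_{k-1} = \SX_k - x_k x_k^\rmT$, so
\begin{align}
    \lambda \SY_{k-1}^\rmT \SP_k = \Theta_{k-1}(\SX_k - x_k x_k^\rmT)\SP_k = \Theta_{k-1} - \Theta_{k-1}x_k x_k^\rmT \SP_k.
\end{align}
Adding $y_k x_k^\rmT \SP_k$ yields $\Theta_k = \Theta_{k-1} + (y_k - \Theta_{k-1}x_k)x_k^\rmT\SP_k$, as claimed. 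As a base case, at $k=0$ there is no data, so $\SY_0=0$, which gives $\Theta_0=0$ and $\SP_0=R_\Theta\inv$.

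The only delicate points are bookkeeping rather than analysis. The first is keeping track of the $\lambda^k$ scaling of the regularizer, which is what makes $\SX_k$ satisfy the clean recursion with $\SX_0=R_\Theta$. The second is the transpose convention when passing from $\tr \SY_k\Theta$ to the gradient, which must be applied consistently so that $\SY_{k-1}^\rmT = \Theta_{k-1}\SX_{k-1}$ holds exactly. Everything else is the standard rank-one inverse update.
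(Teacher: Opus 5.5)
Your proposal is correct and follows the paper's proof step for step. Both arguments use the one-step recursions for $\SX_k$ and $\SY_k$, apply Sherman--Morrison to $\SP_k=\SX_k\inv$, and substitute $\lambda\SY_{k-1}^\rmT=\Theta_{k-1}(\SX_k-x_kx_k^\rmT)$ to get the parameter update. Your remark that $R_\Theta$ must be $l_x\times l_x$ is also a valid repair of a dimensional inconsistency in the stated cost: the regularizer should be read as $\lambda^k\tr(R_\Theta\Theta^\rmT\Theta)$ so that $\SX_k$ is well defined.
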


\begin{proof}
\hypertarget{proof:theta_k_rec}{Proof of Proposition \ref{prop:theta_k_recursive}.}
Note that 
\begin{align}
    \SX_k 
        &=
            \sum_{i=1}^k  \lambda^{k-i} x_i x_i^\rmT + \lambda^{k} R_\rma
        =
            \lambda \SX_{k-1} + x_k x_k^\rmT, 
    \\
    \SY_k 
        &=
            \sum_{i=1}^k  \lambda^{k-i} x_i y_i^\rmT 
        =
            \lambda \SY_{k-1} + x_k y_k^\rmT.
\end{align}

Define $\SP_k \isdef \SX_k^{-1} .$ Then, 
\begin{align}
    \SP_k
        &=
            \left( \lambda \SX_{k-1} + x_k x_k^\rmT \right)^{-1} 
        \nn \\ 
        &=
            \lambda \inv \SX_{k-1}\inv 
            -
            \lambda \inv \SX_{k-1}\inv 
            x_k 
            \gamma_k \inv 
            x_k^\rmT 
            \lambda \inv \SX_{k-1}\inv
        \nn \\ 
        &=
            \lambda \inv \SP_{k-1} 
            -
            \lambda \inv \SP_{k-1} x_k 
            \gamma_k \inv 
            x_k^\rmT \SP_{k-1},
        \\
    \Theta_k
        &=
            \SY_k^\rmT \SX_k\inv
        \nn \\
        &=
            (\lambda \SY_{k-1}^\rmT + y_k x_k^\rmT) \SX_k\inv
        \nn \\
        &=
            (\lambda \Theta_{k-1} \SX_{k-1} + y_k x_k^\rmT) \SX_k\inv
        \nn \\
        &=
            (\Theta_{k-1} (\SX_{k} - x_k x_k^\rmT) + y_k x_k^\rmT) \SX_k\inv
        \nn \\
        &=
            \Theta_{k-1} \SX_{k} \SX_k\inv 
            - \Theta_{k-1} x_k x_k^\rmT \SX_k\inv
            + y_k x_k^\rmT \SX_k\inv
        \nn \\
        &=
            \Theta_{k-1} 
            - \Theta_{k-1} x_k x_k^\rmT \SP_k
            + y_k x_k^\rmT \SP_k
        \nn \\
        &=
            \Theta_{k-1} 
            +
            \left(
                y_k - \Theta_{k-1} x_k
            \right)
            x_k^\rmT \SP_k            ,
\end{align}
which completes the proof. 
\end{proof}



\section{Full-State Feedback Control with Integral Action}
\label{sec:FSFi}
This section summarizes the full-state feedback controller with integral action used in the DMAC framework.
\subsection{Augmented System Formulation}
Consider the discrete-time linear system
\begin{align}
    x_{k+1} &= Ax_k + B u_k, \\
    y_k &= Cx_k,
\end{align}
where $x_k \in \BBR^{l_x}$ is the state, $u_k \in \BBR^{l_u}$ is the control input, and $y_k \in \BBR^{l_y}$ is the measured output.

To achieve reference tracking for a command signal $r_k$, define the integrator state
\begin{align}
    q_k \isdef \sum_{i=0}^k (r_i - y_i) \in \BBR^{l_y}.
\end{align}
Note that the integrator state satisfies
\begin{align}
    q_{k+1} = q_k + e_k,
    \label{eq:integrator_state}
\end{align}
where the output error is
\begin{align}
    e_k \isdef r_k - y_k = r_k - Cx_k.
\end{align}

Next, define the augmented state
\begin{align}
    x_{\rma,k} =
    \matl
        x_k \\
        q_k
    \matr
    \in \BBR^{l_x + l_y}.
\end{align}
The augmented system dynamics then are
\begin{align}
    x_{\rma,k+1} &= A_\rma x_{\rma,k} + B_\rma u_k + B_\rmr r_k, \\
    y_k &= C_\rma x_{\rma,k},
\end{align}
where
\begin{align}
A_\rma &\isdef
\matl
A & 0 \\
-C & I
\matr,
\quad
B_\rma \isdef
\matl
B \\
0
\matr,
\quad
B_\rmr \isdef
\matl
0 \\
I
\matr,
\\
C_\rma &\isdef
\matl
C & 0
\matr .
\end{align}

\subsection{Controller Gain Computation}
To stabilize the system and track the reference command $r_k$, consider the control law
\begin{align}
    u_k = K_x x_k + K_q q_k,
\end{align}
where $K_x \in \BBR^{l_u \times l_x}$ is the state-feedback gain matrix and $K_q \in \BBR^{l_u \times l_y}$ is the integral gain.
Defining
\begin{align}
    K_\rma \isdef \matl K_x & K_q \matr ,
\end{align}
the control law can be written as
\begin{align}
    u_k = K_\rma x_{\rma,k}.
\end{align}
Substituting the control law into the augmented dynamics yields
\begin{align}
    x_{\rma,k+1}
    =
    (A_\rma + B_\rma K_\rma)x_{\rma,k}
    +
    B_\rmr r_k .
\end{align}

Note that if the pair $(A_\rma,B_\rma)$ is stabilizable, then there exists a gain matrix $K_\rma$ such that $(A_\rma+B_\rma K_\rma)$ is Schur stable.
The gain matrix $K_\rma$ can be computed using standard state-feedback design techniques, such as pole placement or linear quadratic regulation.
The following proposition provides conditions under which the pair $(A_\rma,B_\rma)$ is stabilizable.
\begin{proposition}
[\textbf{Integral action and zeros at $z=1$}]
Assume $(A,B)$ is stabilizable. Then $(A_\rma,B_\rma)$ is stabilizable if and only if the triple $(A,B,C)$ has no invariant zeros at $z=1$, or equivalently,
\begin{align}
{\rm rank }
\begin{bmatrix}
A-I & B\\
C & 0
\end{bmatrix}
= l_x + l_y .
\end{align}
\end{proposition}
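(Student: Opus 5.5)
We apply the Popov--Belevitch--Hautus (PBH) test to the augmented pair $(A_\rma,B_\rma)$. Since $(A_\rma,B_\rma)$ is stabilizable if and only if
\begin{align}
{\rm rank}\matl A_\rma - zI & B_\rma \matr = l_x + l_y
\end{align}
for every $z\in\mathbb{C}$ with $|z|\ge 1$, the plan is to compute this rank explicitly from the block structure of $A_\rma$ and $B_\rma$. Substituting the definitions gives
\begin{align}
\matl A_\rma - zI & B_\rma \matr
=
\matl A - zI & 0 & B \\ -C & (1-z) I & 0 \matr .
\end{align}
We then split into the two cases $z\ne 1$ and $z=1$.

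\emph{Case $z\neq 1$ with $|z|\ge 1$.} The block $(1-z)I_{l_y}$ is invertible. We subtract suitable multiples of the middle block column to clear the $-C$ entry, which is a right multiplication by a unimodular (invertible) matrix and so preserves rank. This reduces the matrix to $\matls A-zI & 0 & B \\ 0 & (1-z)I & 0 \matrs$. Its rank is ${\rm rank}\matl A-zI & B\matr + l_y$, which equals $l_x+l_y$ because $(A,B)$ is stabilizable and the PBH test applies to $(A,B)$ at $|z|\ge1$. Hence these $z$ never obstruct stabilizability, regardless of $C$.

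\emph{Case $z=1$.} The middle block column vanishes, so the rank equals
\begin{align}
{\rm rank}\matl A-I & B \\ -C & 0\matr = {\rm rank}\matl A-I & B\\ C & 0\matr ,
\end{align}
where the equality follows by negating a block row. Combining the two cases, $(A_\rma,B_\rma)$ is stabilizable if and only if this Rosenbrock-type matrix has full row rank $l_x+l_y$.

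It remains to identify this rank condition with the absence of invariant zeros at $z=1$, which is the only genuinely delicate point. For a triple whose system matrix is generically of full row rank, an invariant zero at $z_0$ is a value where $\matls A-z_0I & B\\ C & 0\matrs$ drops below its normal rank. The paper's phrase ``or equivalently'' indicates that the rank equation is to be taken as the operative definition. We will state this convention explicitly. We will also note that the condition forces $l_u\ge l_y$, so the normal rank is $l_x+l_y$ and a rank drop at $z=1$ is exactly an invariant zero there. With that identification, the proof is complete.
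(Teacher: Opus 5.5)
Your PBH argument is correct and proves both directions of the equivalence between stabilizability of $(A_\rma,B_\rma)$ and the rank condition. For $z\neq 1$, the invertible block $(1-z)I$ lets you clear $-C$ by a column operation and fall back on stabilizability of $(A,B)$. At $z=1$, the integrator column vanishes and leaves exactly the Rosenbrock matrix. The paper gives no argument of its own here; it only cites Rugh, so your computation is a self-contained replacement for that citation. It also makes visible where each hypothesis enters. Stabilizability of $(A,B)$ handles every mode with $z\neq 1$ regardless of $C$, and the only new obstruction created by the integrators sits at $z=1$.

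The one place to tighten is the link to invariant zeros. It is the rank condition at $z=1$ itself, not $l_u\ge l_y$, that forces the normal rank of the system matrix to be $l_x+l_y$. So your remark only yields ``rank condition $\Rightarrow$ no invariant zero at $z=1$.'' Under the usual normal-rank definition the converse fails. Take $C=0$, $l_u=l_y$, and $(A,B)$ controllable. The system matrix then has rank $l_x$ for every $z$, so there is no rank drop and hence no invariant zero at $z=1$. Yet the integrator states are uncontrollable modes at eigenvalue $1$, so $(A_\rma,B_\rma)$ is not stabilizable. The same happens whenever $l_u<l_y$. So the convention you adopt, reading ``no invariant zero at $z=1$'' as full row rank $l_x+l_y$ of the system matrix at $z=1$, is not merely notation. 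That convention, or an explicit full-row-normal-rank assumption, is needed for the first ``if and only if'' in the statement to hold. You should present it that way rather than as a consequence of the rank condition.
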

\begin{proof}
    See \cite{rugh1996linear}.
\end{proof}

Finally, if $r_k = r$ is constant, then $x_{\rma,k}$ converges to a constant equilibrium point. 
Consequently, the integrator state $q_k$ converges, which implies that the tracking error $e_k$ converges to zero and therefore the output $y_k$ converges to $r$.


\section{Proofs}
\label{appndx:proof}
\subsection{Proof of Proposition \ref{prop:P_pos_def}}
\label{proof:P_pos_def}
\begin{proof}
    It follows from the Sherman--Morrison--Woodbury formula, \cite{deng2011generalization}, that 
    \begin{align}
        (\lambda \SP_k^{-1} + &\phi_k \phi_k^\rmT)^{-1}
            =
            \lambda^{-1}\SP_k
            \nn \\
            &-\lambda^{-1}\SP_k\phi_k\big(\lambda+\phi_k^\rmT\SP_k\phi_k\big)^{-1}\phi_k^\rmT\SP_k
            .
            \nn
    \end{align}
    Next, note that it follows from \eqref{eq:SP_k} that the RSH of the equation above is equal to $\SP_{k+1},$ which proves \eqref{eq:SPinv_rec}.

    Next, note that, for $k=0$, $\SP_1\inv$ is a sum of two positive definite matrices and thus is positive definite. 
    It follows by induction that, for all $k\ge0,$ $\SP_k\inv$ is positive definite, which implies that $\SP_k$ is positive definite.

    Since $\phi_k$ is persistently exciting, $\alpha, \beta,$ and $N$ exist such that \eqref{eq:Pk_bounds} holds. 
    Since $\beta$ is finite, it follows that there exists $\bar \phi \in \BBR$ such that, for all $k \ge 0$, $\|\phi_k\| \le \bar{\phi}.$ 

    Next, it follows from \eqref{eq:SPinv_rec} that
    \begin{align}
        \SP_{k+1}^{-1}
        &=
        \lambda^{k+1} \SP_0^{-1}
        +
        \sum_{i=0}^{k} \lambda^{k-i} \phi_i \phi_i^\rmT 
        \nn \\
        &\le
        \lambda^{k+1} \SP_0^{-1}
        +
        \sum_{i=0}^{k} \lambda^{k-i} \bar{\phi}^2 I_p 
        \nn \\
        &\le
        \SP_0^{-1}
        +
        \frac{\bar{\phi}^2}{1-\lambda} I_p.
        \nn
    \end{align}
    First, consider the case where $\lambda \in (0,1).$
    In this case, $\SP_{k+1} \inv$ is bounded from below and consequently $\SP_{k+1}$ is bounded from above. 
    
    On the other hand, in the case where $\lambda=1,$ $\SP_{k}$ satisfies
    \begin{align}
        \SP_{k+1}^{-1}
            =
            \SP_0^{-1}
            +
            \sum_{i=0}^{k} \phi_i \phi_i^\rmT 
            \geq
            \SP_0^{-1}
            +
            \alpha \lfloor k/N \rfloor I_p,
        \nn 
    \end{align}
    where $\lfloor k/N \rfloor$ is the smallest integer larger than $k/N.$
    Conseqeuntly, $\SP_{k+1}\inv$ grows without bounds and thus $\SP_{k+1} \to 0$ as $k \to \infty.$  
    \end{proof}

\subsection{Proof of Proposition \ref{prop:Theta_tilde_update}}
\label{proof:Theta_tilde_update}
\begin{proof}
Using \eqref{eq:dmac_local_model} and \eqref{eq:Theta_k}, it follows that 
\begin{align}
    \widetilde\Theta_{k+1} \nonumber
    &= \Theta_{k+1} - \Theta^\star \\ \nonumber
    &= \Theta_k-\Theta^\star
        +\big(\xi_{k+1}-\Theta_k\phi_k\big)\phi_k^\rmT \SP_{k+1} \\ \nonumber
    &= \widetilde\Theta_k
        +\big(\Theta^\star\phi_k - (\widetilde\Theta_k+\Theta^\star)\phi_k\big)\phi_k^\rmT \SP_{k+1} \\ \nonumber
    &= \widetilde\Theta_k
        -\widetilde\Theta_k\phi_k\phi_k^\rmT \SP_{k+1} \\
    &= \widetilde\Theta_k\big(I - \phi_k\phi_k^\rmT \SP_{k+1}\big). \nn
\end{align}
Finally, substituting $\phi_k \phi_k^\rmT$ from \eqref{eq:SPinv_rec} in the equation above yields the second equality. 
\end{proof}

\subsection{Proof of Theorem \ref{thm:matrix_rls_stability}}.
\label{proof:matrix_rls_stability}
\begin{proof}
Consider the function 
\begin{align}
    V_k 
        \isdef
            \tr\big(
            \widetilde{\Theta}_k \SP_k^{-1} \widetilde{\Theta}_k^\rmT
            \big). 
    \nn 
\end{align}
Note that, for all $k \ge 0$ and $\widetilde{\Theta}_k \neq 0,$ $V_k > 0$ since  $\SP_k$ is positive definite.
Furthermore, $V_k = 0$ if and only if $\widetilde{\Theta}_k = 0.$


Using \eqref{eq:est_error_update_v2}, it follows that 
\begin{align}
    V_{k+1}
        &=
            \tr
            \big(
            \widetilde{\Theta}_{k+1} \SP_{k+1}^{-1} \widetilde{\Theta}_{k+1}^\rmT
            \big)
        \nn \\
        &=
            \lambda^2
            \tr
            \big(
            \widetilde{\Theta}_k
            \SP_k^{-1} \SP_{k+1}
            \SP_{k+1}^{-1}
            \SP_{k+1} \SP_k^{-1}
            \widetilde{\Theta}_k^\rmT 
            \big)
            \nn \\
        &=
            \lambda^2
            \tr\big(
                \widetilde{\Theta}_k
                \SP_k^{-1} \SP_{k+1} \SP_k^{-1}
                \widetilde{\Theta}_k^\rmT
            \big).
            \label{eq:Vkplus1_expanded_1}
\end{align}

Post-multiplying \eqref{eq:SPinv_rec} by $\SP_{k+1}$ yields
gives
\begin{align}
    \SP_k^{-1} \SP_{k+1}
    =
    \lambda^{-1}
    \Big(
        I - \phi_k \phi_k^\rmT \SP_{k+1}
    \Big).
    \label{eq:Sk_inv_Sk1}
\end{align}
Next, substituting \eqref{eq:Sk_inv_Sk1} into \eqref{eq:Vkplus1_expanded_1} yields
\begin{align}
    V_{k+1}
    &=
    \lambda
    \tr\big(
        \widetilde{\Theta}_k
        \Big(
            I - \phi_k \phi_k^\rmT \SP_{k+1}
        \Big)
        \SP_k^{-1}
        \widetilde{\Theta}_k^\rmT
    \big)
    \nn
    \\
    &=
    \lambda
    \tr\big(
        \widetilde{\Theta}_k \SP_k^{-1} \widetilde{\Theta}_k^\rmT
    \big)
    -
    \lambda
    \tr\big(
        \widetilde{\Theta}_k
        \phi_k \phi_k^\rmT
        \SP_{k+1} \SP_k^{-1}
        \widetilde{\Theta}_k^\rmT
    \big), 
    \nn
\end{align}
which implies that 
\begin{align}
\label{eq:Vk_diff_final}
    V_{k+1} - V_k
    &=
    -(1-\lambda) V_k
    -
    \lambda
    \tr\big(
        \widetilde{\Theta}_k
        \phi_k \phi_k^\rmT
        \SP_{k+1} \SP_k^{-1}
        \widetilde{\Theta}_k^\rmT
    \big)
    \nn \\
    &\le
        -(1-\lambda) V_k
    \nn \\
    &\le
        0,
\end{align}
since 
\begin{align}
    \tr\big(
        \widetilde{\Theta}_k
        \phi_k \phi_k^\rmT
        \SP_{k+1} \SP_k^{-1}
        \widetilde{\Theta}_k^\rmT
    \big)
    =
    \tr\big(
        \widetilde{\Theta}_k^\rmT
        \widetilde{\Theta}_k
        \phi_k \phi_k^\rmT
        \SP_{k+1} \SP_k^{-1}        
    \big)
    \ge 0.
    \nn
\end{align}



Note that if $\lambda = 1$, then, for all $k \geq 0,$
\begin{align}
    V_{k+1} - V_k \leq 0, \nn
\end{align}
which implies Lyapunov stability of the equilibrium $\widetilde \Theta_k = 0.$
Furthermore, if $\lambda \in (0,1)$, then, for all $k \geq 0,$
\begin{align}
    V_{k+1} - V_k < 0, \nn
\end{align}
which implies globally geometric stability of the equilibrium $\widetilde \Theta_k = 0.$
\end{proof}

\subsection{Proof of Theorem \ref{thm:closed_loop_stability_new}}
\label{proof:closed_loop_stability_new}
\begin{proof}
Define the composite Lyapunov function
\begin{align}
    \SV_k(\xi_k, \widetilde{\Theta}_k)
        \isdef
            \SSS_k + V_k, \nn
\end{align}
where
\begin{align}
     \SSS_k
        &\isdef
            \xi_k^\rmT P \xi_k,
     \quad
     V_k
        \isdef
            \tr\big(
            \widetilde{\Theta}_k \SP_k^{-1} \widetilde{\Theta}_k^\rmT
            \big), \nn
\end{align}
and $\widetilde{\Theta}_k$ is given by \eqref{eq:estimation_error}.
Furthermore, $\SP_k$ given by \eqref{eq:SP_k} is positive definite, and $P$ is the positive definite solution of the discrete-time Lyapunov equation
\begin{align}
    (A_c^\star)^\rmT P A_c^\star - P = -Q,
    \nn
\end{align}
where
\begin{align}
    A_c^\star \isdef A^\star + B^\star K^\star.
    \nn
\end{align}
Thus, $\SV_k(\xi_k, \widetilde{\Theta}_k) \ge 0$ since $P$ and $\SP_k$ are positive definite.

Since $v_k$ is i.i.d.\ and zero-mean, it follows from Proposition \ref{prop:exciting_input} that the regressor $\phi_k$ given by \eqref{eq:regressor_def} is persistently exciting. Consequently, by Theorem \ref{thm:matrix_rls_stability},
\begin{align}
    V_{k+1}-V_k \le 0. \nn
\end{align}

Next, define $\Delta \SSS_k \isdef \SSS_{k+1}- \SSS_k.$ Then, 
\begin{align}
    \Delta \SSS_k
        =
            -\xi_k^\rmT Q \xi_k
            + T_{1,k} + T_{2,k} + T_{3,k} + T_{4,k},
    \label{eq:state_diff_full}
\end{align}
where
\begin{align}
    T_{1,k}
    &\isdef
    2 \xi_k^\rmT (A_c^\star)^\rmT P B^\star \widetilde K_k \xi_k,
    \nn
    \\
    T_{2,k}
    &\isdef
    \xi_k^\rmT \widetilde K_k^\rmT B^{\star\rmT} P B^\star \widetilde K_k \xi_k,
    \nn
    \\
    T_{3,k}
    &\isdef
    2 \xi_k^\rmT (A_c^\star)^\rmT P B^\star v_k
    + v_k^\rmT B^{\star\rmT} P B^\star v_k,
    \nn
    \\
    T_{4,k}
    &\isdef
    2 \xi_k^\rmT \widetilde K_k^\rmT B^{\star\rmT} P B^\star v_k,
    \nn
\end{align}
with $\widetilde K_k \isdef K_k - K^\star$.
By Proposition \ref{prop:lqr_gain_convergence_bound}, there exist constants $c_K>0$ and $k_0\in\mathbb{N}$ such that, for all $k\ge k_0$,
\begin{align}
        \|\widetilde K_k\|
        =
        \|K_k-K^\star\|
        \le
        c_K \|\widetilde\Theta_k\|.
        \nn
    \end{align}

\textbf{Bounding the perturbation terms.}
Next, we bound each term. First,
\begin{align}
    |T_{1,k}
    &=
    2\big|\xi_k^\rmT (A_c^\star)^\rmT P B^\star \widetilde K_k \xi_k\big|
    \nn \\
    &\le
    2 \|\xi_k\| \,\|(A_c^\star)^\rmT P B^\star\|\,\|\widetilde K_k\|\,\|\xi_k\|
    \nn \\
    &\le
    2 \|A_c^\star\|\,\|P\|\,\|B^\star\|\,\|\widetilde K_k\|\,\|\xi_k\|^2
    \nn \\
    &\le
    c_1 \|\xi_k\|^2 \|\widetilde\Theta_k\|,
    \label{eq:T1_bound_final}
\end{align}
where
\begin{align}
    c_1 \isdef 2 \|A_c^\star\|\,\|P\|\,\|B^\star\|\,c_K.
    \nn
\end{align}
Next,
\begin{align}
    |T_{2,k}
    &=
    \big|
    \xi_k^\rmT \widetilde K_k^\rmT B^{\star\rmT} P B^\star \widetilde K_k \xi_k
    \big|
    \nn \\
    &\le
    \|B^\star\|^2 \|P\|\,\|\widetilde K_k\|^2\,\|\xi_k\|^2
    \nn \\
    &\le
    c_2 \|\widetilde\Theta_k\|^2\,\|\xi_k\|^2,
    \label{eq:T2_bound_final}
\end{align}
where
    $c_2 \isdef \|B^\star\|^2 \|P\|\,c_K^2.$

Next, for $\epsilon>0,$ note that
\begin{align}
    T_{3,k}
    &\le
    2\|\xi_k\|\,\|(A_c^\star)^\rmT P B^\star\|\,\|v_k\|
    +
    \|B^{\star\rmT} P B^\star\|\,\|v_k\|^2
    \nn \\
    &\le
    \epsilon \|\xi_k\|^2
    +
    \left(
        \frac{1}{\epsilon}\|(A_c^\star)^\rmT P B^\star\|^2
        + \|B^\star\|^2 \|P\|
    \right)\|v_k\|^2
    \nn \\
    &\le
    c_{3x}\|\xi_k\|^2 + c_{3v}\|v_k\|^2,
    \label{eq:T3_bound_final}
\end{align}
where
\begin{align}
    c_{3x} \isdef \epsilon,
    \qquad
    c_{3v} \isdef
    \frac{1}{\epsilon}\|(A_c^\star)^\rmT P B^\star\|^2
    + \|B^\star\|^2 \|P\|.
    \nn
\end{align}

Next, note that
\begin{align}
    |T_{4,k}
    &\le
    2 \|\xi_k\|\,\|\widetilde K_k\|\,\|B^\star\|^2\|P\|\,\|v_k\|
    \nn \\
    &\le
    2 c_K \|B^\star\|^2\|P\|\,
    \|\xi_k\|\,\|\widetilde\Theta_k\|\,\|v_k\|.
    \nn
\end{align}
Applying Young's inequality (Lemma \ref{lem:young}), for $\eta>0,$ yields
\begin{align}
    |T_{4,k}
    &\le
    \eta \|\xi_k\|^2\|\widetilde\Theta_k\|^2
    +
    \frac{1}{\eta}
    \big(c_K\|B^\star\|^2\|P\|\big)^2
    \|v_k\|^2
    \nn \\
    &\le
    c_{4\theta}\|\xi_k\|^2\|\widetilde\Theta_k\|^2 + c_{4v}\|v_k\|^2,
    \label{eq:T4_bound_final}
\end{align}
where
    $c_{4\theta} \isdef \eta,$
    and
    $c_{4v} \isdef \frac{1}{\eta}\big(c_K\|B^\star\|^2\|P\|\big)^2.$

\textbf{Bounding the Lyapunov function.}
Substituting \eqref{eq:T1_bound_final}, \eqref{eq:T2_bound_final},
\eqref{eq:T3_bound_final}, and \eqref{eq:T4_bound_final} into
\eqref{eq:state_diff_full} yields
\begin{align}
    \Delta \SSS_k
    &\le
    -\xi_k^\rmT Q \xi_k
    + c_1 \|\xi_k\|^2 \|\widetilde\Theta_k\|
    + c_2 \|\xi_k\|^2 \|\widetilde\Theta_k\|^2
    \nn \\
    &\quad
    + c_{3x}\|\xi_k\|^2
    + c_{3v}\|v_k\|^2
    + c_{4\theta}\|\xi_k\|^2\|\widetilde\Theta_k\|^2
    \nn \\
    &\quad
    + c_{4v}\|v_k\|^2.
    \label{eq:state_bound_collected_raw}
\end{align}
Combining like terms and using $\xi_k^\rmT Q \xi_k \ge \lambda_{\min}(Q)\|\xi_k\|^2,$ it follows that
\begin{align}
    \Delta \SSS_k
    &\le
    -\lambda_{\min}(Q)\|\xi_k\|^2
    + c_1 \|\xi_k\|^2 \|\widetilde\Theta_k\|
    + c_\theta \|\xi_k\|^2 \|\widetilde\Theta_k\|^2
    \nn \\
    &\quad
    + c_x \|\xi_k\|^2
    + c_v \|v_k\|^2,
    \label{eq:state_bound_collected}
\end{align}
where $c_\theta \isdef c_2 + c_{4\theta}$, $c_x \isdef c_{3x}$, and $c_v \isdef c_{3v}+c_{4v}$.
Next, note that $v_k$ is bounded by $\bar v$ and assuming $\lambda_{\min}(Q) > c_x,$ define $c_q \isdef \lambda_{\min}(Q) - c_x > 0,$ it follows that 
\begin{align}
    \Delta \SSS_k
    &\le
    -c_q\,\|\xi_k\|^2
    + c_1 \|\xi_k\|^2 \|\widetilde\Theta_k\|
    + c_\theta \|\xi_k\|^2 \|\widetilde\Theta_k\|^2
    + c_v \bar v^2.
    \label{eq:state_bound_with_vbar}
\end{align}

\textbf{Small-Gain Condition from Parameter Convergence.}
Since the matrix RLS estimator converges globally and geometrically under persistent excitation for $0<\lambda<1$ (Theorem \ref{thm:matrix_rls_stability}), $\|\widetilde\Theta_k\|\to 0$ as $k\to\infty$.
Fix any $\delta\in(0,1)$ and choose $k$ sufficiently large such that
\begin{align}
    c_1 \|\widetilde\Theta_k\| + c_\theta \|\widetilde\Theta_k\|^2
    \le
    \delta\,c_q.
    \label{eq:theta_small_condition}
\end{align}
Then \eqref{eq:state_bound_with_vbar} implies, for all such $k$,
\begin{align}
    \Delta \SSS_k
    \le
    -(1-\delta)c_q\,\|\xi_k\|^2
    + c_v \bar v^2.
    \label{eq:state_ultimate_decrease}
\end{align}

\textbf{Ultimate Boundedness Radius.}
Define
\begin{align}
    r_v^2
    \isdef
    \frac{c_v}{(1-\delta)c_q}\,\bar v^2.
    \label{eq:rv_def}
\end{align}
Then \eqref{eq:state_ultimate_decrease} implies that whenever $\|\xi_k\|^2 > r_v^2$,
    $\Delta \SSS_k < 0.$
Suppose that for some $k \ge k_0$,
    $\|\xi_k\| > r_v.$
Then, it follows that
$\SSS_{k+1} < \SSS_k.$
Hence, whenever $\|\xi_k\| > r_v$, the sequence $\{\SSS_k\}$ is strictly decreasing.
Since $\SSS_k \ge 0$, it follows that $\|\xi_k\| > r_v$ cannot hold for all $k \ge k_0$.
Therefore, there exists a finite index $k_1 \ge k_0$ such that
$\|\xi_{k_1}\| \le r_v .$

\textbf{Norm Equivalence and Explicit State Bound.}
Since $P=P^\rmT>0$, $\SSS_k$ is norm-equivalent to $\|\xi_k\|^2$, that is, for all $k \geq 0,$
\begin{align}
    \lambda_{\min}(P)\|\xi_k\|^2
    \le
    \SSS_k
    \le
    \lambda_{\max}(P)\|\xi_k\|^2.
    \label{eq:norm_equiv_P}
\end{align}
Thus, $\|\xi_k\|\le r_v$ implies $\SSS_k \le \lambda_{\max}(P)\,r_v^2$.
Hence, for all $k\ge k_1$,
\begin{align}
    \|\xi_k\|^2
    \le
    \frac{\lambda_{\max}(P)}{\lambda_{\min}(P)}\,r_v^2,
\end{align}
From \eqref{eq:rv_def},
\begin{align}
    r_v
    =
    \sqrt{\frac{c_v}{(1-\delta)c_q}}\,\bar v .
\end{align}
Equivalently, $\xi_k$ is uniformly ultimately bounded and enters and remains in the set
\begin{align}
    \left\{\xi:\ \|\xi\|\le
    \underbrace{
    \sqrt{\frac{\lambda_{\max}(P)}{\lambda_{\min}(P)}}
    \sqrt{\frac{c_v}{(1-\delta)c_q}}
    }_{\displaystyle c}
    \,\bar v .\right\},
\end{align}
whose radius scales linearly with $\bar v$. Therefore,
\begin{align}
    \limsup_{k\to\infty}\|\xi_k\|
    \le
    c\,\bar v.
\end{align}

\textbf{Closed-Loop Stability.}
Since $V_{k+1}-V_k\le 0$ and $\SSS_k$ is uniformly ultimately bounded, it follows that
$\SV_k(\xi_k,\widetilde\Theta_k)=\SSS_k+V_k$ is bounded and nonnegative.
Moreover, $\widetilde\Theta_k$ is Lyapunov stable (and globally geometrically convergent for $0<\lambda<1$) by Theorem \ref{thm:matrix_rls_stability}.
Therefore, the closed-loop system is Lyapunov stable and the state $\xi_k$ is uniformly ultimately bounded.
%
\end{proof}

\begin{lemma}[Young's Inequality with Parameter]
\label{lem:young}
For any $a,b \in \BBR$ and any $\epsilon > 0,$ the following inequality holds.
\begin{align}
    2ab \le \epsilon a^2 + \frac{1}{\epsilon} b^2.
    \label{eq:YoungsIneq}
\end{align}
\end{lemma}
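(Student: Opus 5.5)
The plan is to derive the inequality from the nonnegativity of a perfect square, so essentially no machinery is needed. Since $\epsilon>0$, the quantity $\sqrt{\epsilon}$ is a well-defined positive real number. For arbitrary $a,b\in\BBR$, I will consider the real number $\sqrt{\epsilon}\,a - b/\sqrt{\epsilon}$ and use the fact that its square is nonnegative:
\begin{align}
    0 \le \Big(\sqrt{\epsilon}\,a - \frac{1}{\sqrt{\epsilon}}\,b\Big)^2 .
    \nn
\end{align}

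Next I will expand the square term by term. The first term gives $\epsilon a^2$. The cross term gives $-2\,\sqrt{\epsilon}\,a\,b/\sqrt{\epsilon} = -2ab$, because the factors of $\sqrt{\epsilon}$ cancel. The last term gives $b^2/\epsilon$. This yields
\begin{align}
    0 \le \epsilon a^2 - 2ab + \frac{1}{\epsilon} b^2 ,
    \nn
\end{align}
and moving $2ab$ to the left-hand side gives exactly \eqref{eq:YoungsIneq}.

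The argument has no genuine obstacle. The only point needing care is to use $\epsilon>0$ explicitly, both so that $\sqrt{\epsilon}$ and $1/\sqrt{\epsilon}$ are defined and so that the cross term simplifies cleanly.

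It is also worth noting how the lemma is applied in the bound on $T_{4,k}$ in the proof of Theorem \ref{thm:closed_loop_stability_new}. There I would take $a=\|\xi_k\|\,\|\widetilde\Theta_k\|$ and $b=c_K\|B^\star\|^2\|P\|\,\|v_k\|$, with the parameter chosen as $\eta$. Equality holds precisely when $\epsilon a = b$, which confirms that the inequality is sharp.
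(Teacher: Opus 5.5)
Your proof is correct and matches the paper's: both expand the nonnegative square $(\sqrt{\epsilon}\,a - b/\sqrt{\epsilon})^2$ and rearrange. Your identification of how the lemma is applied to $T_{4,k}$ also matches the paper, namely $a=\|\xi_k\|\,\|\widetilde\Theta_k\|$ and $b=c_K\|B^\star\|^2\|P\|\,\|v_k\|$ with parameter $\eta$.
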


\begin{proof}
Since $(\sqrt{\epsilon}\,a - \tfrac{1}{\sqrt{\epsilon}}\,b)^2 \ge 0$ for all $a,b \in \BBR$ and $\epsilon>0,$ it follows that
\begin{align}
    0 
    &\le 
    \epsilon a^2 - 2ab + \frac{1}{\epsilon} b^2,
\end{align}
which, upon rearranging, yields \eqref{eq:YoungsIneq}.
\end{proof}



\section{Supporting Convergence Results}
This Appendix provides additional technical derivations that support the analysis presented in the main text. 
These derivations expand on intermediate steps that were omitted in the paper to improve readability and are included here for completeness.

\begin{definition}
[\textbf{Expected persistence of excitation.}]
\label{def:pe_expectation}
Consider, for all $k \ge 0$, a regressor $\phi_k \in \BBR^p$.
The regressor $\{\phi_k\}$ is called \emph{persistently exciting in expectation} if there exist
positive constants $\alpha$, $\beta$ and an integer $N>p$ such that, for all $k \ge 0$,
\begin{align}\label{eq:pe_expectation}
    \alpha I_p 
        \leq
            \BBE\left[\sum_{i=k}^{k+N} \phi_i \phi_i^\rmT\right]
        \leq
            \beta I_p .
\end{align}
\end{definition}

\begin{remark}
Definition \ref{def:pe_expectation} is a mean-square (expected) variant of the classical definition of persistence of excitation.
This notion is convenient in stochastic settings, that is, in the case where $\phi_k$ depends on random inputs,
and is sufficient for the covariance boundedness arguments used in the subsequent result.
\end{remark}

\begin{proposition}
    [\textbf{Persistence of excitation under random inputs.}]
    \label{prop:exciting_input}    
    Consider the system
    \begin{align}
        x_{k+1} = A x_k + B u_k,
    \end{align}
    where $x_k \in \BBR^n$ and $u_k \in \BBR^m.$
    Assume that $A \in \BBR^{n \times n}$ is Schur stable and the pair $(A,B)$ is controllable. 
    Assume that 
    $x_0$ is bounded and 
    $u_k$ is i.i.d., zero-mean with covariance $R.$
    Then, the regressor
    \begin{align}
        \phi_k
            \isdef 
                \matl 
                    x_k \\
                    u_k
                \matr
                \in 
                \BBR^{n+m}
    \end{align}
    is persistently exciting. 
\end{proposition}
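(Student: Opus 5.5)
We establish the bounds of Definition \ref{def:pe_expectation} for the second-moment matrix $\Sigma_k \isdef \BBE[\phi_k\phi_k^\rmT]$. Summing $\Sigma_i$ over a window of length $N+1$ then gives both inequalities in \eqref{eq:pe_expectation}. The table summary describes the conclusion as persistence of excitation in expectation, and that is the sense we prove here.

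Write $X_k \isdef \BBE[x_k x_k^\rmT]$. Since $u_k$ is independent of $(x_0,u_0,\dots,u_{k-1})$ and zero-mean, and $x_k$ depends only on those, we get $\BBE[x_k u_k^\rmT]=0$. This gives the block-diagonal structure
\begin{align}
    \Sigma_k = \matl X_k & 0 \\ 0 & R \matr,
    \qquad
    X_{k+1} = A X_k A^\rmT + B R B^\rmT .
    \nn
\end{align}
Unrolling the recursion yields
\begin{align}
    X_k = A^k x_0x_0^\rmT (A^k)^\rmT + \sum_{j=0}^{k-1} A^j B R B^\rmT (A^j)^\rmT .
    \nn
\end{align}
If $x_0$ is random, $x_0x_0^\rmT$ is replaced by $\BBE[x_0x_0^\rmT]$, which is bounded by assumption.

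\textbf{Upper bound.} Because $A$ is Schur, there exist $M\ge 1$ and $\rho\in(0,1)$ with $\|A^j\|\le M\rho^j$. Then
\begin{align}
    \|X_k\|\le M^2\|x_0\|^2 + M^2\|B\|^2\|R\|/(1-\rho^2) \isdef \bar X
    \nn
\end{align}
uniformly in $k$. Hence $\Sigma_k \le \max(\bar X,\|R\|) I$, and summing over the window gives the constant $\beta = (N+1)\max(\bar X,\|R\|)$.

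\textbf{Lower bound.} The first term of $X_k$ is positive semidefinite. Since $R>0$, for $k\ge n$ we have
\begin{align}
    X_k \ge \lambda_{\min}(R)\,\mathcal W_n, \qquad \mathcal W_n \isdef \sum_{j=0}^{n-1} A^jBB^\rmT(A^j)^\rmT .
    \nn
\end{align}
Controllability of $(A,B)$ makes $\mathcal W_n>0$. Consequently $\Sigma_k \ge \min(\lambda_{\min}(R)\lambda_{\min}(\mathcal W_n),\lambda_{\min}(R)) I$ for all $k\ge n$.

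The main obstacle is the initial indices $k<n$, where $X_k$ may be singular, for example $X_0=0$ when $x_0=0$. We handle this by choosing the window length $N\ge n+p$, with $p=n+m$. Every window $\{k,\dots,k+N\}$ then contains an index $i\ge n$, so $\sum_{i=k}^{k+N}\Sigma_i \ge \Sigma_{i^\ast}$ for some $i^\ast\ge n$. This gives $\alpha$ equal to the constant obtained from the lower bound.

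We will also remark that this is a mean-square statement. Pathwise persistence of excitation as in Definition \ref{def:pe} is stronger and would need an additional ergodic or concentration argument for the i.i.d. inputs, which we will not pursue.
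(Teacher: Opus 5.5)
Your proof is correct and follows essentially the same route as the paper: both take expectations, eliminate the cross terms $\BBE[x_i u_i^\rmT]$ by independence and zero mean, lower-bound the state block by an $R$-weighted controllability Gramian, and use Schur stability for the upper bound, concluding persistence of excitation in expectation. The one real difference is where the Gramian comes from: the paper takes it from the fresh inputs $u_k,\ldots,u_{k+L-1}$ inside each window (via $x_{k+L}$ and $\SC_L$), whereas your exact Lyapunov recursion uses the Gramian accumulated from time zero. As a result, your ``obstacle'' at $k<n$ is harmless and in fact automatic, since $k+N\ge N>p>n$. Your recursion also makes explicit the vanishing of the cross term between $A^L x_k$ and the fresh-input part, which the paper leaves implicit.
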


\begin{proof}

For all $k \geq 0,$ define
\begin{align}
    S_k 
        \isdef
            \sum_{i=k}^{k+N} \phi_i \phi_i^\rmT
        =
            \sum_{i=k}^{k+N}
            \matl
                x_i x_i^\rmT & x_i u_i^\rmT \\
                u_i x_i^\rmT & u_i u_i^\rmT
            \matr.
    \label{eq:Sk_def}
\end{align}

\smallskip
\noindent\textbf{Lower bound.}
Since $\BBE[u_i u_i^\rmT]=R> 0$, it follows that
\begin{align}\label{eq:u_lower}
\BBE \left[\sum_{i=k}^{k+N} u_i u_i^\rmT\right]
=
(N+1)R
\geq
(N+1)\lambda_{\min}(R) I_m.
\end{align}
Next, note that, for $\ell\ge 1,$ $x_{k+\ell}$ can be written as
\begin{align}
    x_{k+\ell}
        =
            A^\ell x_k + \sum_{j=0}^{\ell-1} A^{\ell-1-j} B u_{k+j}.
    \nn
\end{align}
Since $(A,B)$ is controllable, there exists an integer $L\ge 1$ such that
\begin{align}
    \SC_L 
        \isdef
            \matl B & AB & \cdots & A^{L-1}B \matr
            \in \BBR^{n\times (Lm)}
    \nn
\end{align}
has full row rank.
Define
\begin{align}
    \tilde{x}_{k+L}
        \isdef
            \sum_{j=0}^{L-1} A^{L-1-j}B u_{k+j}
        =
            \SC_L \matl u_{k+L-1}\\ \vdots \\ u_k \matr.
    \nn
\end{align}
Since $\{u_k\}$ is i.i.d.\ with covariance $R$ and $\SC_L$ has full row rank, it follows that
\begin{align} 
    \BBE[\tilde{x}_{k+L}\tilde{x}_{k+L}^\rmT]
        =
            \SC_L (I_L\otimes R)\SC_L^\rmT.
    \nn
\end{align}
is positive definite. 
Hence, there exists $\eta>0$ such that, for all $k\ge 0,$
\begin{align}\label{eq:x_lower_one}
    \BBE[\tilde{x}_{k+L}\tilde{x}_{k+L}^\rmT] \geq \eta I_n
    .
\end{align}

Next, note that, for any $N\ge L,$
\begin{align}
    \sum_{i=k}^{k+N} x_i x_i^\rmT \geq x_{k+L}x_{k+L}^\rmT.
    \nn
\end{align}

Since $u_{k+L}$ is independent of $\{u_k,\ldots,u_{k+L-1}\}$ and satisfies $\BBE[u_{k+L}]=0$, it follows that
\begin{align}
    \BBE[x_{k+L} & u_{k+L}^\rmT]
        =
            \BBE \left[
                \left(A^L x_k + \sum_{j=0}^{L-1} A^{L-1-j} B u_{k+j}\right) u_{k+L}^\rmT
            \right]
            \nn \\
        &=
            A^L \BBE[x_k u_{k+L}^\rmT]
            +
            \sum_{j=0}^{L-1} A^{L-1-j} B \, \BBE[u_{k+j} u_{k+L}^\rmT] 
            \nn \\
        &=
            0.
        \nn 
\end{align}


Therefore, taking expectations in \eqref{eq:Sk_def} and using \eqref{eq:u_lower} and
\eqref{eq:x_lower_one}, yields, for any $N\ge L$,
\begin{align}\label{eq:Sk_lower}
    \BBE[S_k]
        =
        \matl
        \BBE \left[\sum_{i=k}^{k+N} x_i x_i^\rmT\right] & 0\\
        0 & \BBE \left[\sum_{i=k}^{k+N} u_i u_i^\rmT\right]
        \matr
        \geq
        \alpha I_{n+m},
\end{align}
where
\begin{align}
    \alpha \isdef \min\left\{\eta,\ (N+1)\lambda_{\min}(R)\right\} 
        >
            0.
    \nn
\end{align}

\smallskip
\noindent\textbf{Upper bound.}
Since $A$ is Schur and $x_0$ is bounded, 
there exists $c_x>0$ such that $\sup_{k\ge 0}\BBE[\|x_k\|^2]\le c_x$.
It follows that, for all $k\ge 0$,
\begin{align}
    \BBE \left[\sum_{i=k}^{k+N} x_i x_i^\rmT\right]
        \leq
            \sum_{i=k}^{k+N} \BBE[\|x_i\|^2]\, I_n
        \leq
            (N+1)c_x\, I_n. 
    \nn
\end{align}
Furthermore, since $\BBE[u_i u_i^\rmT]=R$, it follows that
\begin{align} 
    \BBE \left[\sum_{i=k}^{k+N} u_i u_i^\rmT\right]
    =
        (N+1)R
    \leq
        (N+1)\lambda_{\max}(R) I_m.
        \nn
\end{align}
Hence, for all $k\ge 0$, 
\begin{align}\label{eq:Sk_upper}
    \BBE[S_k]
    \leq
    \beta I_{n+m},
\end{align}
where
\begin{align}
    \beta 
        \isdef
            \max\left\{(N+1)c_x,\ (N+1)\lambda_{\max}(R)\right\} < \infty.
    \nn
\end{align}

Finally, \eqref{eq:Sk_lower} and \eqref{eq:Sk_upper} imply that there exist $\alpha>0$, $\beta>0$,
and an integer $N>n+m$ such that, for all $k\ge 0$,
\begin{align}
    \alpha I_{n+m}
        \leq
            \BBE \left[\sum_{i=k}^{k+N} \phi_i \phi_i^\rmT\right]
        \leq
            \beta I_{n+m}.
\nn
\end{align}
It thus follows from Definition \ref{def:pe_expectation} that $\phi_k$ is persistently exciting in expectation.
\end{proof}

The next result follows from continuity of the stabilizing solution of the discrete-time algebraic Riccati equation with respect to $(A,B)$ and continuity of the LQR gain map.

\begin{proposition}
[\textbf{Continuity and local Lipschitz property of the LQR gain.}]
\label{prop:lqr_gain_convergence_bound}
    Let $Q \in \BBR^{n \times n}$ and $R \in \BBR^{m \times m}$ be positive definite.     
    Consider matrices $A \in \BBR^{n \times n}$ and $B \in \BBR^{n \times m}$ such that 
    $(A,B)$ is stabilizable and $(A,Q^{1/2})$ is detectable. 
    Let $K$ denote the infinite-horizon LQR gain associated with the stabilizing solution of the discrete-time algebraic Riccati equation for $(A,B, Q, R)$.
    Assume that the sequence $(A_k, B_k)$ converges to $(A,B),$ where $(A_k,B_k)$ is stabilizable for all $k.$
    Then, the infinite-horizon LQR gain $K_k$ computed using $(A_k, B_k, Q, R)$ converges to $K.$ 

    Furthermore, define $\Theta \isdef \matl A & B \matr$ and $\Theta_k \isdef \matl A_k & B_k \matr$.
    There exist constants $\rho>0$ and $c_K>0$ such that if $\|\Theta_k-\Theta\|\le \rho$, then
    \begin{align}
        \|K_k - K\|
        \le
        c_K \|\Theta_k - \Theta\|.
        \label{eq:lqr_gain_lipschitz}
    \end{align}


\end{proposition}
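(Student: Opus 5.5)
The plan is to treat the LQR gain as a composite of smooth maps and apply the implicit function theorem to the discrete-time algebraic Riccati equation (DARE). Define, for $\Theta' = \matl A' & B' \matr$ and symmetric $X$,
\begin{align}
    F(X,\Theta') \isdef A'^\rmT X A' - X + Q - A'^\rmT X B' (R + B'^\rmT X B')^{-1} B'^\rmT X A'.
    \nn
\end{align}
The stabilizing solution $X$ for $(A,B)$ satisfies $F(X,\Theta)=0$. Since $R>0$ and $X\ge 0$, the inverse $(R+B'^\rmT X' B')^{-1}$ exists on a neighborhood of $(X,\Theta)$, so $F$ is real-analytic there. The gain is
\begin{align}
    K(X,\Theta') = -(R+B'^\rmT X B')^{-1}B'^\rmT X A',
    \nn
\end{align}
which is also smooth. (The sign convention matches $u_k = K_k\xi_k$.)

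\textbf{Step 1 (invertibility of the partial derivative).} Using the standard DARE identity, the Fréchet derivative of $F$ in $X$ at the stabilizing solution is
\begin{align}
    \mathcal L(\Delta) = A_c^\rmT \Delta A_c - \Delta,
    \qquad
    A_c \isdef A + B K.
    \nn
\end{align}
The cross terms cancel because of the optimality of $K$. Stabilizability and detectability ensure that the stabilizing solution exists and that $A_c$ is Schur. Then $\mathcal L$ is a Stein operator whose eigenvalues are $\mu_i\mu_j - 1 \neq 0$, where the $\mu_i$ are the eigenvalues of $A_c$. Hence $\mathcal L$ is invertible on symmetric matrices.

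\textbf{Step 2 (a local smooth solution branch).} The implicit function theorem gives $\rho_0>0$ and a smooth map $\Theta'\mapsto X(\Theta')$ on the ball $\|\Theta'-\Theta\|\le\rho_0$ with $F(X(\Theta'),\Theta')=0$ and $X(\Theta)=X$. Two further facts are needed on this branch:
\begin{enumerate}
    \item By continuity, $A'+B'K(X(\Theta'),\Theta')$ stays Schur for $\rho_0$ small, and $X(\Theta')\ge0$ (shrinking $\rho_0$ if necessary, or using the Lyapunov representation of $X(\Theta')$).
    \item The branch therefore consists of stabilizing solutions. Since the stabilizing solution of the DARE is unique whenever it exists, $X(\Theta')$ coincides with the ARE solution that defines $K_k$.
\end{enumerate}
This identification step is the main obstacle. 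The implicit function theorem only produces \emph{some} solution near $X$; to conclude that it is \emph{the} stabilizing solution used in the statement, one must invoke both the uniqueness of stabilizing solutions and the openness of the Schur property.

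\textbf{Step 3 (Lipschitz bound and convergence).} The composite $\Theta'\mapsto K(X(\Theta'),\Theta')$ is $C^1$ on the closed ball of radius $\rho \isdef \rho_0/2$. By the mean value inequality,
\begin{align}
    \|K_k-K\|\le c_K\|\Theta_k-\Theta\|,
    \qquad
    c_K \isdef \sup_{\|\Theta'-\Theta\|\le\rho}\|D K(\Theta')\|<\infty,
    \nn
\end{align}
which is \eqref{eq:lqr_gain_lipschitz}. Convergence $K_k\to K$ then follows: since $\Theta_k\to\Theta$, eventually $\|\Theta_k-\Theta\|\le\rho$, and the Lipschitz bound gives $\|K_k-K\|\to0$. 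The hypothesis that $(A_k,B_k)$ is stabilizable for all $k$ is used only so that $K_k$ is well-defined for the early indices, before $\Theta_k$ enters the ball.
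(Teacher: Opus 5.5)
Your proof is correct, but it takes a different route from the paper, mainly in where the real work is done.

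The paper takes two facts about the map $(A,B)\mapsto P(A,B)$ to the stabilizing DARE solution as known: its continuity and its local Lipschitz continuity, citing Lancaster--Rodman and Abou-Kandil et al. From these it proves $K_k\to K$ by passing limits through $S_k=R+B_k^\rmT P_k B_k$, $S_k^{-1}$ and $B_k^\rmT P_k A_k$. It then gets the Lipschitz bound by splitting $K_k-K=(S_k^{-1}-S^{-1})B_k^\rmT P_k A_k+S^{-1}(B_k^\rmT P_k A_k-B^\rmT PA)$ and asserting a Lipschitz constant for each factor.

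You instead prove the regularity of the Riccati solution yourself. You apply the implicit function theorem to the DARE. Stationarity of the optimal gain reduces the $X$-derivative at the stabilizing solution to the Stein operator $\Delta\mapsto A_c^\rmT\Delta A_c-\Delta$, which is invertible exactly because $A_c$ is Schur. Openness of the Schur property and uniqueness of the stabilizing solution then identify the implicit branch with the solution that defines $K_k$.

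What your route buys:
\begin{enumerate}
\item It is self-contained, and it actually gives real-analyticity, not just Lipschitz continuity.
\item It gives an explicit description of $c_K$ as $\sup\|DK\|$ over the ball.
\item It shows that stabilizability of $(A_k,B_k)$ is automatic near $(A,B)$, so that hypothesis matters only for finitely many early indices.
\item Deducing $K_k\to K$ from the Lipschitz estimate makes the paper's separate continuity argument unnecessary.
\end{enumerate}

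The paper's route is shorter, but it rests entirely on the cited perturbation theory, which is precisely what your Steps 1--2 supply.

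Incidentally, your sign convention $K=-(R+B^\rmT XB)^{-1}B^\rmT XA$ is the one consistent with $A+BK$ being Schur. The paper's $K=S^{-1}B^\rmT PA$ drops the minus sign, which does not affect the estimate.
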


\begin{proof}
For each $k$, let $P_k$ denote the stabilizing symmetric positive definite solution of the discrete-time algebraic Riccati equation (DARE) associated with $(A_k,B_k,Q,R)$, and let $P$ denote the stabilizing solution of the DARE associated with $(A,B,Q,R)$. Define
\begin{align}
    S_k &\isdef R + B_k^\rmT P_k B_k, 
    \nn
    \\
    S &\isdef R + B^\rmT P B.
    \nn
\end{align}
The corresponding infinite-horizon LQR gains are
\begin{align}
    K_k &= S_k^{-1}B_k^\rmT P_k A_k, 
    \nn
    \\    
    K &= S^{-1}B^\rmT P A.    
    \nn
\end{align}

Since $(A_k,B_k)\to(A,B)$, and the pairs $(A_k,B_k)$ remain stabilizable and $(A_k,Q^{1/2})$ remain detectable, and hence the DARE for $(A_k,B_k,Q,R)$ admits a unique positive definite solution $P_k> 0.$

Next, note that for the stabilizing DARE solution that, on the set of stabilizable/detectable quadruples $(A,B,Q,R)$ with $Q> 0$, $R> 0$ and stabilizing solution, the mapping
\begin{align}
(A,B)\mapsto P(A,B)
\nn
\end{align}
is continuous \cite{lancaster1995algebraic,aboukandil2003matrix}.
Therefore, as $k\to\infty,$ $P_k \to P.$

We now show that this implies $K_k\to K$. 

Next, since $P_k\to P$ and $(A_k,B_k)\to(A,B),$ it follows that 
\begin{align}
B_k^\rmT P_k B_k \to B^\rmT P B, 
\nn
\\
B_k^\rmT P_k A_k \to B^\rmT P A,
\nn
\end{align}
which implies that $S_k \to S$.
Since $R> 0$ and $P_k> 0$, it follows that, for all $k\geq0,$ $S_k \geq R > 0,$ and thus $S_k$ is invertible, and inversion is continuous on the cone of symmetric positive definite matrices, which implies that $S_k^{-1}\to S^{-1}$.

Next, note that 
\begin{align}
K_k - K
    &= 
        S_k^{-1}B_k^\rmT P_k A_k - S^{-1}B^\rmT P A
    \nn \\
    &=
        (S_k^{-1}-S^{-1})B_k^\rmT P_k A_k + S^{-1}\bigl(B_k^\rmT P_k A_k - B^\rmT P A\bigr),
    \label{eq:Kk_minus_K}
\end{align}
which implies that 
\begin{align}
\|K_k-K\|
    &\le
        \|S_k^{-1}-S^{-1}\|\,\|B_k^\rmT P_k A_k\| 
    \nn \\
    &+ \|S^{-1}\|\,\|B_k^\rmT P_k A_k - B^\rmT P A\|.
    \nn
\end{align}
Since $S_k^{-1}\to S^{-1}$ and $B_k^\rmT P_k A_k \to B^\rmT P A,$ it follows that the right-hand side converges to $0,$ and thus $\|K_k-K\|\to 0$.

Finally, note that since the mapping $(A,B)\mapsto P(A,B)$ is locally Lipschitz in a neighborhood of $(A,B)$ on the set of stabilizable/detectable pairs with $Q>0$ and $R>0$ \cite{lancaster1995algebraic,aboukandil2003matrix},
there exist constants $\rho>0$ and $c_P>0$ such that, for all $\Theta_k$ satisfying $\|\Theta_k-\Theta\|\le \rho$ and remaining stabilizable/detectable,
\begin{align}
    \|P_k - P\|
    \le
    c_P \|\Theta_k - \Theta\|.
\end{align}
Moreover, for $\|\Theta_k-\Theta\|\le \rho$ with $\rho$ sufficiently small, the matrices $A_k,B_k,P_k$ remain uniformly bounded and $S_k$ remains uniformly positive definite, which implies that the matrix inversion is locally Lipschitz on $\{S_k\}$. 
Consequently, there exists a constant $c_{\rm inv}>0$ such that
\begin{align}
\|S_k^{-1}-S^{-1}\|
\le
c_{\rm inv}\|S_k-S\|.
\nn
\end{align}
Similarly, there exist constants $c_1,c_2>0$ such that
\begin{align}
\|S_k-S\|
&\le
c_1 \|\Theta_k-\Theta\|,
\nn\\
\|B_k^\rmT P_k A_k - B^\rmT P A\|
&\le
c_2 \|\Theta_k-\Theta\|.
\nn
\end{align}
Combining these bounds yields
\begin{align}
    \|K_k-K\|
    &\le
        \|S_k^{-1}-S^{-1}\|\,\|B_k^\rmT P_k A_k\|
        \nn \\
        &+
        \|S^{-1}\|\,\|B_k^\rmT P_k A_k - B^\rmT P A\|
        \nn\\
        &\le
        c_K \|\Theta_k-\Theta\|,
        \nn
\end{align}
for some constant $c_K>0.$
\end{proof}

\section*{Funding Data}

This research was supported by the Office of Naval Research grant N00014-23-1-2468.












\bibliographystyle{asmejour} 
\bibliography{DMDbib}     
\end{document}